\documentclass[cmp]{svjour}  

\usepackage[usenames,dvipsnames]{xcolor}
\usepackage[pdftex]{graphicx}  
\usepackage{epstopdf} 
\DeclareGraphicsExtensions{.pdf,.eps,.png,.jpg,.mps} 
\usepackage{epsfig} 
\usepackage[caption=false]{subfig}
\usepackage{float}
\usepackage{wrapfig}

\journalname{Communications in Mathematical Physics}

\usepackage{amsmath,amssymb,bbm,bm}
\usepackage{mathtools}
\usepackage{centernot}  
\usepackage{wasysym}  
\usepackage{relsize}       


\usepackage{array}    
\usepackage{fancyhdr}
\usepackage{import}  
\usepackage{setspace}
\usepackage{exscale}
\usepackage[english]{babel}


\newcommand{\mc}{\mathcal}            
\newcommand{\matr}[1]{\mathsf{#1}} 

\newcommand{\G}{\mathcal{G}}   

\newcommand{\ot}{\leftarrow}    






\newcommand{\up}{\uparrow}                                
\newcommand{\down}{\downarrow}                       

\begin{document}

\title{An Explicit Bound for Dynamical Localisation in an Interacting Many-Body System}
\titlerunning{Dynamical Localisation in an Interacting Many-Body System}

\author{P.-L. Giscard\inst{1}  \and Z. Choo\inst{2} \and M. T. Mitchison\inst{3}\fnmsep\inst{1} \and J. J. Mendoza-Arenas\inst{1} \and D. Jaksch\inst{1}\fnmsep\inst{4}}
\institute{University of Oxford, Department of Physics, Clarendon Laboratory, Oxford OX1 3PU, UK, \and University of Oxford, Department of Statistics, 1 South Parks Road, Oxford OX1 3TG, UK \and Quantum Optics and Laser Science Group, Blackett Laboratory, Imperial College London, London SW7 2BW, UK, \and Centre for Quantum Technologies, National University of Singapore, 3 Science Drive 2, Singapore 117543,\\ \email{p.giscard1@physics.ox.ac.uk}}
\authorrunning{P.-L. Giscard, Z. Choo, M. Mitchison, J. J. Mendoza-Arenas, D. Jaksch}



\date{\today}

\maketitle
\begin{abstract}
We characterise and study dynamical localisation of a finite interacting quantum many-body system. We present explicit bounds on the disorder strength required for the onset of localisation of the dynamics of arbitrary ensemble of sites of the XYZ spin-1/2 model.
We obtain these results using a novel form of the fractional moment criterion, which we establish, together with a generalisation of the self-avoiding walk representation of the system Green's functions, called path-sums. These techniques are not specific to the XYZ model and hold in a much more general setting.
We further present bounds for two observable quantities in the localised regime: the magnetisation of any sublattice of the system as well as the linear magnetic response function of the system. We confirm our results through numerical simulations.
\end{abstract}

\keywords{Dynamical localisation, Many-body localisation, Interactions, Path-sums, Fractional moments}

\section{Introduction}
The mathematical and physical understanding of Anderson localisation for a single particle in a lattice with random on-site potentials has greatly progressed since it was first recognised in 1958 \cite{Anderson_PRB_1958,Kramer_RPP_1993,Hundertmark2007}. In contrast, the theory of many-body localisation is still in its infancy, despite the large amount of recent research focusing on localisation in disordered many-particle systems (for example Refs. \cite{Gornyi_PRL_2005,Basko_AoP_2006,Oganesyan_PRB_2007,Znidaric_PRB_2008,karahaliosPRB2009,berkelbach2010prb,barisicPRB2010,Monthus_PRB_2010,Aleiner_NatPhys_2010,Pal2010,Khatami2012,albrecht2012prb}).  In particular, there is still no clear consensus on the precise definition of the many-body localised phase, with a number of different definitions having been proposed in the recent literature (e.g.\ Refs. \cite{Basko_AoP_2006,Znidaric_PRB_2008,Aizenman2009,Monthus_PRB_2010,albrecht2012prb,Bauer_arXiv_2013}). Consequently, a global picture of many-body localisation is still lacking.



An important advance towards the conceptual understanding of the problem was the introduction of the notion of localisation in the quantum configuration or Fock space \cite{Altshuler_PRL_1997}, which has emerged as a powerful framework for understanding many-body localisation \cite{Gornyi_PRL_2005,Basko_AoP_2006,Monthus_PRB_2010,Bauer_arXiv_2013}. Configuration-space localisation has also proved useful in understanding the absence of thermalisation in integrable systems \cite{Canovi2011} and glassy dynamics in strongly interacting systems \cite{Carleo2012}. 
We treat the evolution of the many-body quantum state as that of a single fictitious particle moving in an exponentially large configuration space. This viewpoint leads us quite naturally to define many-body dynamical localisation as Anderson localisation of this effective one-particle problem. Furthermore, it facilitates a non-perturbative, mathematically rigorous treatment.

Rigorous mathematical results concerning localisation in interacting many-body systems  have been established in e.g. \cite{Aizenman2009} and \cite{Chulaevsky2009}. However, the critical disorder inducing a many-body localisation transition as predicted by these studies is difficult to ascertain and the physical consequences of the localisation remain, in part, unexplored. 
The contribution of the present work to the study of many-body localisation is thus three-fold. 
First we provide an explicit bound for the critical amount of disorder necessary for localisation to occur. 
Second, we make concrete predictions of experimental signatures of the localisation by providing bounds for two physically observable quantities in the localised regime: the magnetisation of any finite ensemble of sites immersed in the system as well as  the linear magnetic response function. We confirm our predictions with Time Evolving Block Decimation (TEBD) simulations of a 1D system.
Third, our approach to many-body localisation differs from existing approaches \cite{Aizenman2009} and \cite{Chulaevsky2009}. While we do use a form of fractional moment criterion as a mathematical signature of localisation, we rely on a novel representation of the Green's functions associated to any sublattice of the full system to prove that this criterion is met for a finite amount of disorder. This representation, known as path-sums \cite{Giscard2012}, is a generalisation of the self-avoiding walk representation exploited in the one-body setting \cite{Hundertmark2007}. The path-sums approach 
facilitates the manipulation and systematic representation of matrix resolvents and Green's functions. We hope that it will ultimately contribute to the proof of localisation in \textit{infinite} interacting many-body systems, which remains elusive.

The article is organised as follows. In \S \ref{model} we present the model system and general mathematical strategy we employ. For the sake of concreteness we focus on the particular case of a finite, interacting spin-$\frac{1}{2}$ system described by the XYZ-Hamiltonian. This section culminates in a criterion characterising the localised regime. In \S \ref{ManyBodyBound} we present a bound on the amount of disorder required for the criterion introduced in the preceding section to be fulfilled. We then provide constraints for the magnetisation of arbitrary sublattices of the system as well as the correlations between spins at sites of the sublattice in the localised regime.
The bound on the latter quantity is shown to imply a bound on the linear magnetic response function to magnetic perturbations, which we find to be vanishingly small deep in the localised regime.
We conclude in \S\ref{conclu}. 
The proofs of all technical results are given in the appendices. Appendix \ref{MBproof} contains the proof of the main result concerning the onset of localisation and a brief presentation of the path-sums approach. In Appendix \ref{TechnicRes} we gather the proofs of technical lemmas necessary to the proof presented in Appendix \ref{MBproof}.
Finally, in Appendices \ref{LocMagn} and \ref{Localcorrel} we derive the results concerning the magnetisation and correlations in the localised regime.

%
%

\section{Model system and approach}\label{model}
\subsection{Model system}
\subsubsection{Hamiltonian}
\noindent We consider a lattice of $N$ spin-1/2 particles governed by the XYZ-Hamiltonian, which represents one of the simplest models of strongly-interacting many-body systems. In one dimension, this is equivalent to a tight-binding model of spin-polarised fermions with nearest-neighbour density-density interaction. However, we consider the general case of an arbitrary number of dimensions. The Hamiltonian reads
\begin{equation}\label{HamiltonianXYZ}
\matr{H}=\sum_{i} B_i \sigma_i^z+\sum_{\langle i j \rangle}\big\{J_x\sigma_{i}^x\sigma_{j}^x+J_y\sigma_{i}^y\sigma_{j}^y+\Delta\sigma_{i}^z\sigma_{j}^z\big\}, 
\end{equation}
where $J_x$, $J_y$ and $\Delta$ parametrise the spin-spin interactions, and  $\sum_{\langle i j \rangle}$ denotes a sum over nearest neighbours. The $\{B_i\}_{i\in[1,N]}$ variables are local magnetic fields which we assume to be independent and identically distributed (iid) random variables taken from a Gaussian distribution with standard deviation $\sigma_B$. For convenience, we choose 
the basis states of the Hilbert space to be configurations of spins along the direction of the magnetic fields, e.g. for a two spin-1/2 system this is $\{|\hspace{-1mm}\up\up\rangle,\,|\hspace{-1mm}\up\down\rangle,\,|\hspace{-1mm}\down\up\rangle,\,|\hspace{-1mm}\down\down\rangle\}$, where $\up$ and $\down$ designate the situations where the spin is aligned and antialigned with the magnetic fields, respectively. 

For $J_x\neq J_y$, the XYZ-Hamiltonian does not conserve the total number of up and down spins present at any time in the system. Since we can add the sum of all the random magnetic fields $\sum_{i} B_i$ to the Hamiltonian without affecting the dynamics, an up spin is effectively equivalent to a particle seeing a random onsite potential $2B_i$ at site $i$, while a down spin always sees 0 onsite potential, being effectively equivalent to the absence of a particle. In this sense, the XYZ Hamiltonian \emph{does not conserve the number of particles} present in the system.
At the opposite, when $J_x=J_y$, the number of up- and down-spins present in the system is conserved by $\matr{H}$. We call this specific situation the XXZ regime.

\subsubsection{Mathematical description}
We consider the total system described by the XYZ Hamiltonian \eqref{HamiltonianXYZ}, denoted by $\mathbb{S}$, as comprising two parts $S$ and $S'$. We shall use a partial basis of the configuration space of $\mathbb{S}$ that specifies only the state of $S'$ \footnote{We use the notation $S$ and $S'$ for consistency with Ref. \cite{GiscardWS}, where the use of of system partitions in quantum mechanics was introduced.}. From hereon we designate the configurations available to $S$ and $S'$ by Roman and Greek letters, respectively. 

Mathematically this procedure corresponds to working with a tensor product partition of the Hamiltonian into smaller matrices \cite{Giscard2012}. 
These small matrices, denoted $\matr{H}_{\omega\ot\alpha}$, are submatrices of $\matr{H}$ of size $2^{|S|}\times 2^{|S|}$ with $|S|$ the number of sites in $S$ and fulfill the characteristic relation of tensor product partitions \cite{Giscard2012}
\begin{subequations}
\begin{equation}
|\omega\rangle\langle \alpha|\otimes \matr{H}_{\omega\ot\alpha}=\big(\matr{P}_{\omega}\otimes \matr{I}_S\big)\,.\,\matr{H}\,.\,\big(\matr{P}_{\alpha}\otimes \matr{I}_S\big),
\end{equation}
or equivalently
\begin{equation}\label{PartitionRelation}
\matr{H}_{\omega\ot\alpha}=\big(\langle \omega|\otimes \matr{I}_S\big)\,.\,\matr{H}\,.\,\big(|\alpha\rangle\otimes \matr{I}_S\big),
\end{equation}
\end{subequations}
with $\matr{I}_S$ the identity matrix on the spins of $S$ and $\matr{P}_{\alpha}=|\alpha\rangle\langle\alpha|$ the projector onto configuration $\alpha$ of $S'$. The matrix $\matr{H}_{\alpha}\equiv \matr{H}_{\alpha\ot\alpha}$ is a small effective Hamiltonian governing the evolution of $S$ when $S'$ is static in configuration $\alpha$. For this reason the $\matr{H}_{\alpha}$ matrices are called statics. Similarly, the matrix $\matr{H}_{\omega\ot\alpha}$ represents the impact on $S$ of a transition of $S'$ from configuration $\alpha$ to $\omega$ and for this reason is called a flip.
For the choice $S=\emptyset$, $S'=\mathbb{S}$ the statics and flips of the Hamiltonian identify with its matrix elements $\matr{H}_{\omega\ot\alpha}=(\matr{H})_{\omega\alpha}$. For the opposite choice  $S=\mathbb{S}$, $S'=\emptyset$, there is no flip and only one static, which is the Hamiltonian itself $\matr{H}_{\alpha}=\matr{H}$. In general, the statics and flips can be thought of as generalised matrix elements, which interpolate between the entries of $\matr{H}$ and $\matr{H}$ itself.
Thanks to Eq.~(\ref{PartitionRelation}), we find the statics of a tensor product partition of the XYZ Hamiltonian, Eq.~(\ref{HamiltonianXYZ}), to be
%
%
\begin{align}
\matr{H}_{\alpha}&=\sum_{i\in S}B_i\sigma_i^z+\sum_{i\in S}\sum_{\substack{j\in\langle i\rangle\\ j\in S} }(J_x\sigma_i^x\sigma_j^x+J_y\sigma_i^y\sigma_j^y+\Delta \sigma^z_i\sigma_j^z)\nonumber\\
&\hspace{5mm}+\sum_{i\in S}\sum_{j\in\langle i\rangle\atop j\in S'}\Delta \sigma^z_{i}(1-2\delta_{\down_j,\,\alpha})+\sum_{i\in S'}B_i(1-2\delta_{\down_i,\,\alpha})\matr{I}_S\nonumber\\
&\hspace{5mm}+\Delta\sum_{i\in S'}\sum_{\substack{j\in \langle i\rangle\\j\in S'}}(1-2\delta_{\down_i\up_{j},\,\alpha}-2\delta_{\up_i\down_{j},\,\alpha})\matr{I}_S,\label{Ha}
\end{align}
where, $|\alpha|$ is the number of up-spins at sites of $S'$ when it is in configuration $\alpha$, and, for example, $\delta_{\down_j,\,\alpha}$ is 1 if the spin at site $j$ is down in configuration $\alpha$ and 0 otherwise. For latter convenience, we define $Y_\alpha=\sum_{i\in S'}B_i(1-2\delta_{\down_i,\,\alpha})$  a random variable representing the disorder due solely to $S'$ in the static $\matr{H}_\alpha$. The static is thus conveniently separated into two pieces as
\begin{equation}\label{StaticFormY}
\matr{H}_\alpha= Y_\alpha\, \matr{I}_S + \tilde{\matr{H}}_{\alpha},
\end{equation}
where $\tilde{\matr{H}}_{\alpha}$ depends deterministically on $\alpha$ and comprises the random $B$-fields  at sites of $S$ but does not depend on any of the random $B$-fields at sites of $S'$. 
The flips of the Hamiltonian are
%
%
\begin{align}
\matr{H}_{\omega\ot\alpha}&=\begin{cases}(\matr{H}_{\omega\ot\alpha})_{ji}=J_x+J_y,\,&\text{if }(\alpha i)\text{ and }(\omega j)\text{ differ by the flips of two}\\
&\text{neighboring spins as follows $\up\down\longleftrightarrow\down\up$},\\
(\matr{H}_{\omega\ot\alpha})_{ji}=J_x-J_y,\,&\text{if }(\alpha i)\text{ and }(\omega j)\text{ differ by the flips of two}\\
&\text{neighboring spins as follows $\up\up\longleftrightarrow\down\down$}\\
0,&\text{otherwise,}
\end{cases}
\end{align}
with for example $(\alpha i)$ the configuration of $\mathbb{S}$ where $S'$ and $S$ are in configuration $\alpha$ and $i$, respectively. 
In the following we will only need the two-norm of the flips, which is bounded as follows $\|\matr{H}_{\omega\ot\alpha}\|\leq 2(J_x^2+J_y^2)^{1/2}:=2\mc{J}$.


\subsection{Dynamics of the system equivalent particle}\label{DynSEP}~\\

\noindent Our goal is to establish dynamical localisation of $S'$ regardless of the state of $S$. In many-body situations, it is useful to consider localisation of the system directly in its configuration space rather than on the physical lattice \cite{Basko_AoP_2006,Monthus_PRB_2010,Aizenman2009}. As long as the system is finite, the configuration space associated to any partition $\mathbb{S}=S\cup S'$ of the system is a weighted discrete graph $\G_{S,S'}$, which we construct as follows:
\begin{itemize}
\item[i)] For each configuration $\alpha$ available to $S'$, we draw a vertex $v_\alpha$.
\item[ii)] If $\matr{H}_{\omega\ot\alpha}\neq0$, we draw an edge $(\alpha\omega)$ from vertex $v_{\alpha}$ to vertex $v_{\omega}$. 
\item[iii)] To each edge $(\alpha\omega)$, we associate the weight $\matr{H}_{\omega\ot\alpha}$.
\end{itemize}
Finally, we define $d(\alpha,\omega)$ the distance between two configurations of the system as the length of the shortest walk on $\G_{S,S'}$ from vertex $v_{\alpha}$ to vertex $v_{\omega}$. 

The configuration graph gives an alternative picture of the system dynamics.
For the choice $S=\emptyset$, $S'=\mathbb{S}$ the statics and flips are the matrix entries of the Hamiltonian $\matr{H}_{\omega\ot\alpha}
=(\matr{H})_{\omega\alpha}$ and the dynamics in real space is equivalent to that of a single particle in the configuration space, which we call the System Equivalent 
Particle (SEP), undergoing a continuous time quantum random walk on $
\G_{\emptyset,\mathbb{S}}$. 
Furthermore, in the XXZ regime where $J_x=J_y$,  if there is initially single up-spin in a sea of down-spins (or the opposite), the configuration graph $\G_{\emptyset,\mathbb{S}}$ is simply the real-space lattice and the SEP represents this single up-spin. 

For general partitions $\mathbb{S}=S\cup S'$, $S$ can be interpreted as the internal degree of freedom of the SEP, while its trajectory on the graph $\G_{S,S'}$ represents the evolution of $S'$. When $S'$ is in configuration $\alpha$, the SEP is on site $v_\alpha$ and feels the random potential $Y_\alpha$, which we then call the configuration potential at $\alpha$. Depending on the amount of disorder present in the various configuration potentials, the SEP may be dynamically localised, which corresponds to many-body localisation of $S'$, regardless of the state of $S$.

\subsection{Criterion for many-body localisation}\label{CritMB}~\\

\noindent In analogy with the case of a single particle on a lattice \cite{Hundertmark2007}, the natural signature of SEP dynamical localisation is simply the absence of diffusion of the SEP on $\G_{S,S'}$. This is expressed rigorously using the same tensor product partition that we used for the Hamiltonian but this time for the evolution operator $\matr{P}_{I}(\matr{H}) e^{-i \matr{H} t}$, where $\matr{P}_{I}$ is the projector onto the eigensubspace of $\matr{H}$ with energy in bounded interval $I$. The elements of this partition fulfill the characteristic relation
\begin{subequations}
\begin{equation}
|\omega\rangle\langle \alpha|\otimes \Big(\matr{P}_{I}(\matr{H}) e^{-i \matr{H} t}\Big)_{\omega\ot\alpha}=\big(\matr{P}_{\omega}\otimes \matr{I}_S\big)\,.\,\matr{P}_{I}(\matr{H}) e^{-i \matr{H} t}\,.\,\big(\matr{P}_{\alpha}\otimes \matr{I}_S\big),
\end{equation}
or equivalently
\begin{equation}
\Big(\matr{P}_{I}(\matr{H}) e^{-i \matr{H} t}\Big)_{\omega\ot\alpha}=\big(\langle \omega|\otimes \matr{I}_S\big)\,.\,\matr{P}_{I}(\matr{H}) e^{-i \matr{H} t}\,.\,\big(|\alpha\rangle\otimes \matr{I}_S\big).
\end{equation}
\end{subequations}
For convenience, we say that $(\matr{P}_{I}(\matr{H}) e^{-i \matr{H} t})_{\omega\ot\alpha}$ is a flip of the evolution operator when $\omega\neq \alpha$ and, otherwise, a static of the evolution operator.
In this formalism, localisation of the SEP on the configuration graph $\G_{S,S'}$ is characterised by an exponential bound on the norm of the flips of the evolution operator ($\hbar=1$)
\begin{subequations}\label{DynLocall}
\begin{equation}\label{DynLoc1}
\mathbb{E}\bigg[\sup_{t\in\mathbb{R}}\Big\|\Big(\matr{P}_{I}(\matr{H}) e^{-i \matr{H} t}\Big)_{\omega\ot\alpha}\Big\|\bigg]\leq C e^{-d(\alpha,\omega)/\zeta},
\end{equation}
or equivalently
\begin{equation}
\mathbb{E}\bigg[\sup_{t\in\mathbb{R}}\Big\|\big(\langle\omega|\otimes \matr{I}_S\big) \matr{P}_{I}(\matr{H}) e^{-i \matr{H} t}\big(|\alpha\rangle \otimes\matr{I}_S\big)\Big\|\bigg]\leq C e^{-d(\alpha,\omega)/\zeta}.\label{DynLoc}
\end{equation}
\end{subequations}
In these expressions $\mathbb{E}[.]$ is the expectation with respect to all the random variables, $\sup_{t\in\mathbb{R}}$ is the supremum over time, $C<\infty$, $\zeta>0$ and $\|.\|$ is the 2-norm. Note, for the choice $S=\emptyset$, $S'=\mathbb{S}$, Eq.~(\ref{DynLocall}) reduces to an exponential bound on off-diagonal matrix elements of the evolution operator.

By construction, the criterion of Eqs.~(\ref{DynLocall}) is a natural extension of the criterion for dynamical localisation of isolated particles. Furthermore, we show below that the criterion is met for a finite amount of disorder in a finite interacting many-body system described by the XYZ-Hamiltonian of Eq.~(\ref{HamiltonianXYZ}) and that once verified, Eqs.~(\ref{DynLocall}) lead to constraints on the magnetisation of $S'$, the correlations between its sites and the linear magnetic response function. 

Before we progress, two remarks are in order. First, for technical reasons, we only consider finite energy intervals $I$. We note however that for any realistic finite system this is not constraining since there must be a finite energy interval containing all possible spectra of the random XYZ model\footnote{We do not assume nor use this in the mathematical proofs.}. Second, from now on we restrict ourselves to sublattices $S'$ comprising only non-adjacent sites. This is not a fundamental requirement of our approach, but it facilitates the mathematical proofs. The case of adjacent sites is discussed in Remark \ref{Snonadjacent}, p. \pageref{Snonadjacent} below.

\section{Dynamical localisation}\label{ManyBodyBound}
In this section we present the main results of this article. For the sake of clarity, we only sketch here the outline of the proof and defer all the details to the appendices. In \S\ref{MainResMB} we give bounds for the onset of dynamical localisation in a many-body system and in \S\ref{locmagnmain}-\ref{loccorrmain} we describe observable signatures of this localisation.
%
%

\subsection{Many-body localisation}\label{MainResMB}~\\

\noindent A major difficulty in the study of many-body localisation as compared to one-body localisation lies in that the effective amount of disorder present in the system grows only \textit{logarithmically} with the size of the relevant space $\G_{S,S'}$. Indeed, since there are only $|S'|$ random $B$-fields affecting the sites of $S'$, there are only $|S'|$ linearly independent configuration potentials. In the same time $S'$ can adopt up to $2^{|S'|}$ configurations.
Thus, the total amount of disorder affecting the SEP is very small as compared to the size of the space on which it evolves. Furthermore, this disorder is also  highly correlated, as the configuration potentials are linearly dependent. 
These observations explain why the proof of localisation for many-body systems is much more involved than its one-body counterpart and leads to much larger bounds on the minimum amount of disorder required for localisation to happen.\\

\noindent \textbf{Outline of our approach:}\\
We obtain a bound on the disorder strength required for the many-body localisation criterion Eq.~(\ref{DynLoc1}) to be met as follows:
\begin{itemize}
\item[1)]We relate Eq.~(\ref{DynLoc1}) to a novel form of fractional moment criterion. More precisely, we consider generalised Green's functions which evolve the SEP on its configuration space. These are matrices dictating the evolution of $S'$ regardless of that of $S$. They arise from a tensor product partition of the system resolvent $(z\matr{I}-\matr{H})^{-1}$. We show that if fractional powers of the 2-norm of these generalised Green's functions are exponentially bounded, then Eq.~(\ref{DynLoc1}) is satisfied.
\item[2)]We express generalised Green's functions non-perturbatively using the method of path-sums, which represents the generalised Green's functions as a superposition of simple paths undergone by the SEP on its configuration space. Simple paths are walks forbidden from visiting any vertex more than once. On Euclidean lattices they are known as self-avoiding walks.
\item[3)]Using path-sums representations, we directly obtain an upper-bound on the required fractional norms. No induction on the number of particles is required, in contrast to Refs. \cite{Aizenman2009} and \cite{Chulaevsky2009}.
\end{itemize}

\noindent These steps finally yield the main theorem of the present article (see Appendix \ref{MBproof} for details):

\begin{theorem}[Many-body dynamical localisation in an interacting system]\label{MBtheorem} Consider an ensemble of particles evolving on a finite lattice according to the XYZ-Hamiltonian of Eq.~(\ref{HamiltonianXYZ}). 
Let $S'$ an ensemble of non-adjacent lattice sites and $I$ be an interval of energy with finite Lebesgue measure $|I|$. Then for any $0<s<1$ and any $2^{-|S'|}<s_1<1$,
\begin{subequations}
\begin{align}
&\mathbb{E}\bigg[\sup_{t\in\mathbb{R}}\Big\|\Big(\matr{P}_{I}(\matr{H}) e^{-i \matr{H} t}\Big)_{\omega\ot\alpha}\Big\|\bigg]\leq C(\omega,\alpha)\,e^{-\frac{(s_1-s)}{s_1(2-s)}\,\frac{d(\alpha,\omega)}{\zeta}},\label{BoundtheoremMB}\\
\shortintertext{where}
&C(\omega,\alpha)=\frac{1}{2\pi}\big(2|I|\big)^{\frac{1}{2-s}} c[s_1]^{\frac{s}{s_1}}\frac{2^{-|S'|s}D^s}{(\sigma_B\sqrt{2\pi})^s}\times\\
&\hspace{35mm}\left(\frac{1}{s |S'|}\Phi\big(e^{-1/\zeta},1,d(\alpha,\omega)+1-s^{-1}\big)\right)^{\frac{s_1-s}{s_1}},\nonumber\\
&\zeta^{-1}=s\log\Big(\frac{\sigma_B\sqrt{2\pi}}{D\mc{J}\,|S'|^{1/s}}\Big),
\end{align}
where $\zeta$ is the localisation length and
\begin{align}
&\zeta>0~~~\text{if and only if}~~~\sigma_B>\sigma_B^{\text{min}}=D\mc{J}\,|S'|^{1/s}/\sqrt{2\pi}.
\end{align}
\end{subequations}
In these expressions, $\mc{J}=(J_x^2+J_y^2)^{1/2}$, $D=2^{|\mathbb{S}|}$, $c[s_1]=4^{1-s_1}k^{s_1}/(1-s_1)$ where $k$ is a finite universal constant and $\Phi$ is the Hurwitz-Lerch zeta-function, which decays algebraically with $d(\alpha,\omega)$ when $d(\alpha,\omega)>s^{-1}-1$.
\end{theorem}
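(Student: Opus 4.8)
\noindent\emph{Proof proposal.} I would prove Theorem~\ref{MBtheorem} along the three-step route sketched above, working throughout with the tensor-product-partition block $\matr{G}(z)_{\omega\ot\alpha}=(\langle\omega|\otimes\matr{I}_S)(z\matr{I}-\matr{H})^{-1}(|\alpha\rangle\otimes\matr{I}_S)$ of the resolvent, defined for $\matr{G}$ exactly as \eqref{PartitionRelation} defines the statics and flips of $\matr{H}$. The plan is: (1) reduce the left side of \eqref{BoundtheoremMB} to a bound on the fractional moment $\mathbb{E}[\|\matr{G}(E+i\ep)_{\omega\ot\alpha}\|^{s}]$, uniform in $E\in I$ and $\ep>0$; (2) expand $\matr{G}(z)_{\omega\ot\alpha}$ as a path-sum over the simple paths of $\G_{S,S'}$; (3) bound that path-sum term by term, using a Wegner-type estimate for the dressed Green's functions it contains. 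The two exponents $s$ and $s_1$ enter at steps~(1) and~(3) respectively, and $C(\omega,\alpha)$ collects the resulting constants.

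\emph{Step~(1).} Writing $(\matr{P}_{I}(\matr{H})e^{-i\matr{H}t})_{\omega\ot\alpha}=\int_I e^{-iEt}\,\rmd\mu_{\omega\ot\alpha}(E)$ for the $S$-matrix-valued spectral measure, and using $|e^{-iEt}|=1$, gives $\sup_t\|(\matr{P}_{I}(\matr{H})e^{-i\matr{H}t})_{\omega\ot\alpha}\|\le|\mu_{\omega\ot\alpha}|(I)$. I would control this total variation by boundary values of $\matr{G}(E+i\ep)_{\omega\ot\alpha}$, reducing the off-diagonal block to the diagonal ones $\omega\ot\omega$, $\alpha\ot\alpha$ and $\tfrac1{\sqrt2}(\omega\pm\alpha)\ot\tfrac1{\sqrt2}(\omega\pm\alpha)$ by polarisation, and using the Cauchy--Schwarz bound $|\mu_{\omega\ot\alpha}|\le|\mu_{\omega\ot\omega}|^{1/2}|\mu_{\alpha\ot\alpha}|^{1/2}$ to supply the a priori control needed in an interpolation. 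Applying Hölder with conjugate exponents $2-s$ and $\tfrac{2-s}{1-s}$ against the deterministic estimate $\int_I\|\Im\,\matr{G}(E+i\ep)\|\,\rmd E\le\int_I\mathrm{Tr}\,\Im\,\matr{G}(E+i\ep)\,\rmd E\le\pi\,\mathrm{Tr}\,\matr{P}_{I}\le\pi D$, valid because $\Im\,\matr{G}(E+i\ep)$ is positive semi-definite for $\ep>0$ and $\mathrm{rank}\,\matr{P}_{I}\le D=2^{|\mathbb{S}|}$, and then taking $\mathbb{E}[\cdot]$ (Jensen, Fubini), yields the partition analogue of the standard fractional-moment / eigenfunction-correlator bound: the factors $(2|I|)^{1/(2-s)}$ and a power of $D$ in $C(\omega,\alpha)$ come precisely from these Hölder exponents and the $\mathrm{rank}\,\matr{P}_{I}$ estimate. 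This is the ``novel form of the fractional moment criterion''.

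\emph{Steps~(2)--(3).} Applying the path-sum calculus of Ref.~\cite{Giscard2012} to the tensor-product partition, one has $\matr{G}(z)_{\omega\ot\alpha}=\sum_{p}\tilde{\matr{G}}^{(p)}_{\gamma_\ell}\matr{H}_{\gamma_\ell\ot\gamma_{\ell-1}}\tilde{\matr{G}}^{(p)}_{\gamma_{\ell-1}}\cdots\matr{H}_{\gamma_1\ot\gamma_0}\tilde{\matr{G}}^{(p)}_{\gamma_0}$, the sum running over simple paths $p=(\gamma_0{=}\alpha,\dots,\gamma_\ell{=}\omega)$ of $\G_{S,S'}$ of every length $\ell\ge d(\alpha,\omega)$, where the dressed Green's function $\tilde{\matr{G}}^{(p)}_{\gamma_k}=(z\matr{I}_S-Y_{\gamma_k}\matr{I}_S-\tilde{\matr{H}}_{\gamma_k}-\Sigma^{(p)}_{\gamma_k})^{-1}$ is the resolvent of the $S$-subsystem on the graph with $\gamma_0,\dots,\gamma_{k-1}$ deleted, itself recursively a path-sum over simple cycles, and carries the $S'$-randomness through the shift $Y_{\gamma_k}$ of \eqref{StaticFormY}; absolute convergence (for $\ep\ne0$, then a.s.\ on the boundary) follows from a Combes--Thomas bound on the self-energies $\Sigma^{(p)}_{\gamma_k}$, which is also where the system-size factor $D$ enters the per-step weight. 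Since $x\mapsto x^s$ is subadditive and $\|\matr{H}_{\omega\ot\alpha}\|\le2\mc{J}$, $\mathbb{E}[\|\matr{G}(z)_{\omega\ot\alpha}\|^{s}]\le\sum_p(2\mc{J})^{\ell s}\,\mathbb{E}[\prod_{k=0}^{\ell}\|\tilde{\matr{G}}^{(p)}_{\gamma_k}\|^{s}]$. The engine is a Wegner-type a priori bound: conditioning on all randomness except one suitably chosen field $B_i$ and integrating against its Gaussian density, which is $\le1/(\sigma_B\sqrt{2\pi})$, gives $\mathbb{E}[\|\tilde{\matr{G}}^{(p)}_{\gamma_k}\|^{s_1}\mid\cdots]\le c[s_1]/(\sigma_B\sqrt{2\pi})^{s_1}$ for any $2^{-|S'|}<s_1<1$, the cutoff being dictated by an integrability requirement in the averaging over the $2^{|S'|}$ configuration potentials, and the passage from the reference field to the one actually appearing in $Y_{\gamma_k}$ costing only the universal constant $k$ in an Aizenman--Molchanov swap; Jensen then downgrades the exponent to $s$ at cost $c[s_1]^{s/s_1}$. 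Chaining these conditional estimates along $p$ decouples the $2^{|S'|}$ linearly dependent configuration potentials, and using that $S'$ non-adjacent forces every edge of $\G_{S,S'}$ to flip exactly one site of $S'$ — so $\G_{S,S'}$ has degree $\le|S'|$, hence $\le|S'|^\ell$ simple paths of length $\ell$ — one obtains, after summing over $\ell\ge d(\alpha,\omega)$, a geometric-type series whose ratio (with the Combes--Thomas factor $D$ and the path count folded in) is $e^{-1/\zeta}$ with $\zeta$ as stated, convergent (equivalently $\zeta>0$) iff $\sigma_B>D\mc{J}|S'|^{1/s}/\sqrt{2\pi}$, its tail being the Hurwitz--Lerch function $\Phi(e^{-1/\zeta},1,d(\alpha,\omega)+1-s^{-1})$. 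Inserting this bound on $\mathbb{E}[\|\matr{G}(E+i\ep)_{\omega\ot\alpha}\|^{s}]$ into the Step~(1) criterion and tracking the two Hölder/Jensen exponents through produces $C(\omega,\alpha)$ and the decay rate $\tfrac{s_1-s}{s_1(2-s)\zeta}$.

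\emph{Main obstacle.} The crux is the term-by-term Wegner estimate in Step~(3): each field $B_i$ has a bounded density, but there are only $|S'|$ of them whereas the SEP moves among $2^{|S'|}$ configurations, so the potentials $\{Y_\gamma\}$ met along a path are heavily correlated and cannot be averaged independently — this is exactly what forces the induction on particle number in Refs.~\cite{Aizenman2009,Chulaevsky2009}. What makes the path-sum approach work is that the configuration-space walk underlying $\matr{G}(z)_{\omega\ot\alpha}$ is a genuine \emph{simple} path, so its vertices are distinct configurations and the fields can be revealed in an order adapted to the path, each dressed Green's function being integrated against the conditional law of a single field; making this conditional chaining rigorous, together with the uniform Combes--Thomas control of the self-energies $\Sigma^{(p)}_{\gamma_k}$ (which also fixes how $D$ and $|S'|$ enter the per-step weight, hence $\zeta$ and $\sigma_B^{\mathrm{min}}$), is the technical heart of the proof, carried out in Appendices~\ref{MBproof}--\ref{TechnicRes}.
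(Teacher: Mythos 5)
Your three-step architecture matches the paper's: a fractional moment criterion for tensor-product partitions (Lemma~\ref{FtoG}, which the paper proves via Cauchy's integral formula and the deterministic bound $\int_I\|G\|^2\,d\gamma_I\le 1$ rather than your spectral-measure/polarisation route, but to the same effect and the same constant $(2|I|)^{1/(2-s)}$), the path-sum expansion over simple paths of $\G_{S,S'}$ with dressed diagonal Green's functions, and a per-step average followed by summation over path lengths. You also correctly identify the crux: only $|S'|$ independent fields feed $2^{|S'|}$ configuration potentials. But the mechanism you propose for the per-step average has a genuine gap. You condition on all randomness except one field $B_i$ and claim the dressed Green's function $\tilde{\matr{G}}^{(p)}_{\gamma_k}$ can then be integrated against that field's Gaussian density as if it were $[Y\matr{I}+\matr{A}]^{-1}$ with $\matr{A}$ fixed. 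It cannot: $\tilde{\matr{G}}^{(p)}_{\gamma_k}$ depends on $B_i$ not only through the shift $Y_{\gamma_k}$ but also through its self-energy $\Sigma^{(p)}_{\gamma_k}$, since \emph{every} configuration potential on the remaining graph is a linear combination of the same $|S'|$ fields; the Green's function is a complicated matrix-valued rational function of the integration variable, and a Wegner/layer-cake bound does not apply directly. The paper's missing idea here is the $Y_{\alpha_j}$-collapse: merge all vertices whose potentials are linearly dependent on $Y_{\alpha_j}$ into one collapsed vertex, observe that the collapsed resolvent has the exact form $[z\matr{I}-Y_{\alpha_j}\matr{I}-\Sigma_{v_{Y_{\alpha_j}}}]^{-1}$ with $\Sigma_{v_{Y_{\alpha_j}}}$ \emph{independent} of $Y_{\alpha_j}$, and that the original Green's function is a submatrix of it; Lemma~\ref{ExpectationProp} then gives the bound with the rank $D_j\le D/2$, which is where $D$ enters $\zeta$ (no Combes--Thomas estimate is used or needed).

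The second gap is the factorisation of $\mathbb{E}\big[\prod_k\|\tilde{\matr{G}}^{(p)}_{\gamma_k}\|^{s}\big]$. Your ``conditional chaining'' along the path — revealing fields in path order and integrating one Green's function at a time — is exactly what the paper states is impossible here, because the linear dependence of the configuration potentials prevents distributing conditional expectations through the product as in the one-body case. The paper instead applies the generalised H\"older inequality with $q_j=\ell+1$, which converts the product of norms into a product of expectations at exponents $q_js$, and this is what forces the initial restriction $s<1/(2^{|S'|}+1)$. That restriction is then lifted by the extension Lemma~\ref{ExtensionLemma} (an interpolation using the Aizenman--Warzel a priori bound $\mathbb{E}[\|G\|^{s_1}]\le c[s_1]2^{|S|s_1}\|\rho\|_\infty^{s_1}$), applied once to the full off-diagonal Green's function at the end. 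This is where $s_1$, the threshold $2^{-|S'|}<s_1$, the constant $c[s_1]^{s/s_1}$, the factor $2^{|S|s}=2^{-|S'|s}D^s$, and the exponent $(s_1-s)/s_1$ in the decay rate actually come from — not from a per-step Wegner estimate at exponent $s_1$ as in your sketch. Without the H\"older decoupling, the collapse, and the extension lemma, the chain of estimates you describe does not close.
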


%
%
%
%
%
%
%

\begin{remark}[Dependency on the system size]
The bound we obtain on the critical amount of disorder required for the onset of localisation scales exponentially with the system size $|\mathbb{S}|$.
This is unsurprising in view of the fact that the XYZ Hamiltonian does not conserve the number of particles in the system. For this reason, adding a single site to the lattice doubles the number of configurations available to the system. 
\end{remark}

\begin{remark}[Dependency on the system partition]\label{RemarkPartition}
In the situation where the Hamiltonian is partitioned into its matrix elements, i.e. the flips and statics are the entries of the Hamiltonian $\matr{H}_{\omega\ot\alpha}=(\matr{H})_{\omega\alpha}$, we have $S=\emptyset$, $S'=\mathbb{S}$ and thus $|S'|=|\mathbb{S}|$. This degrades both the bound on the critical disorder $\sigma_B^{\text{min}}$ and the localisation length $\zeta$ linearly with the system size. It is therefore beneficial to use partitions with small $|S'|$ values.
\end{remark}

\begin{remark}[Dependency on the interaction strength along $z$]
The bound obtained above for the critical amount of disorder does not depend on the interaction parameter $\Delta$, but rather holds for any $\Delta$. This is because of a crucial step of our proof, 
where we determine a bound for the norm of a generalised Green's functions  independently of the details of the self-energy. By the same token, we lose information about any influence the interaction strength $\Delta$ might have on the onset of localisation. Thus, our approach does not permit us to study the influence of $\Delta$. We could partially circumvent this problem by treating the deterministic interaction term in configuration $\alpha$ as a realisation of a random variable $\tilde{\Delta}_{\alpha}$ whose probability distribution can be evaluated from the Hamiltonian. Then one would take expectations with respect to a new random configuration potential $\tilde{Y}_{\alpha}= Y_\alpha+\tilde{\Delta}_{\alpha}$ and the strength $\Delta$ of the interaction would play a role similar to the variance of the random $B$-fields. 
\end{remark}

\begin{remark}[Dependency on the lattice]\label{Snonadjacent}
The theorem holds regardless of the dimension and details of the physical lattice. This is because we require $S'$ to comprise only non-adjacent sites. In this situation, the configuration graph $\G_{S,S'}$ must be an $|S'|$-hypercube $\mathcal{H}_{|S'|}$ or a subgraph of it, regardless of the physical lattice. Indeed, $\mathcal{H}_{|S'|}$ connects all configurations of $S'$ differing by one spin flip. Thus if the configuration graph is not $\mathcal{H}_{|S'|}$ or a subgraph of it, then there is at least one transition allowed between two configurations of $S'$ differing by two spin flips. This corresponds to two non-adjacent spins flipping at the same time, which is not allowed by the Hamiltonian. This approach leads to simpler results when deriving observable signatures of the localised regime, see \S\ref{locmagnmain}, \S\ref{loccorrmain} and Appendices \ref{LocMagn} and \ref{Localcorrel}. 
Our results can nonetheless be extended so as to hold in situations where $S'$ comprises adjacent sites. For example, if the physical lattice is $\mathbb{Z}^d$, $d\geq 1$, then the bound on the critical disorder becomes $\sigma_B^{\text{min}}=D\mc{J}\,d^{1/s}|S'|^{1/s}/\sqrt{2\pi}$.
\end{remark}

\begin{remark}[Extension to other Hamiltonians]
While we present Theorem \ref{MBtheorem} explicitly in the case of the XYZ-Hamiltonian, we do so for the sake of concreteness. The arguments developed in the proof of the theorem do not make use of the specifics of the XYZ-Hamiltonian, so that the theorem can easily be extended other Hamiltonians and other probability distributions $\rho$ for the random magnetic fields\footnote{So long as $\rho$ remains H\"{o}lder continuous.}, as considered in Ref. \cite{Aizenman2009}. 
\end{remark}

\begin{remark}[Use of the 2-norm] Theorem \ref{MBtheorem} is stated in terms of the 2-norm which might seem to obscure the physical meaning of the bound Eq.~(\ref{BoundtheoremMB}). Indeed physical quantities of interest such as the system magnetisation and correlations functions are most naturally expressed in terms of the Frobenius norm. Nonetheless we show explicitly in the next section that Theorem \ref{MBtheorem} implies constraints on such physical quantities, as would be expected if the theorem involved the Frobenius norm. 
\end{remark}

%

\subsection{Localisation of sublattice magnetisations}\label{locmagnmain}~\\

\noindent In this section and the next we discuss observable consequences of the criterion for many-body localisation Eq.~(\ref{DynLoc1}). Although the bound for the onset of many-body localisation presented in Theorem \ref{MBtheorem} holds for any finite ensemble of non-adjacent sites $S'$, there is a particular choice of sites for $S'$ that leads to a simple structured configuration graph, allowing for easier interpretations of the consequences of SEP dynamical localisation.

Suppose that $S'$ consists of a set of sites that are well separated in real space. In this case, the surrounding sites belonging to $S$ effectively act as a spin bath with which the sites of $S'$ can freely exchange spin excitations. Clearly, $S$ will act more effectively as a bath in higher dimensions, when each site of the lattice has more nearest neighbours. A key consequence of this choice is that the number of up-spins in $S'$ can vary, even in the XXZ regime $J_x=J_y$ where the total number of up spins in the entire system is conserved. 
Thus, in this situation and regardless of the values of $J_x$ and $J_y$, every one of the $2^{|S'|}$ configurations available to $S'$ is accessible. Furthermore, each configuration of $S'$ is connected to exactly $|S'|$ other configurations, because each spin within $S'$ is free to flip its state when $J_x\neq J_y$, or, in the XXZ regime, by virtue of its contact with the spins of $S$. These considerations uniquely specify the configuration graph of $S'$ as the $|S'|$-hypercube.
Then the dynamics of the SEP is that of a particle undergoing a continuous-time quantum random walk on the hypercube, perceiving highly correlated random potentials at the various sites. 

Now an important consequence of the structure of the hypercube is that the distance $d(\alpha,\,\omega)$ between any two configurations $\alpha$ and $\omega$ available to $S'$ is at least equal to the difference between the number of up-spins in the two configurations $d(\alpha,\,\omega)\geq|n_\alpha-n_\omega|$,
with $n_\alpha$ the number of up-spins in $S'$ when it is in configuration $\alpha$.
Thus SEP dynamical localisation as per our criterion Eq.~(\ref{DynLoc1}) immediately implies localisation in the number of up-spins in the sublattice 
\begin{subequations}
\begin{equation}
\mathbb{E}\bigg[\sup_{t\in\mathbb{R}}\Big\|\big(\matr{P}_{I}(\matr{H})e^{-i\matr{H}t}\big)_{n'\ot n}\Big\|\bigg]\leq C e^{-|n-n'|/\zeta},
\end{equation}
or equivalently
\begin{equation}
\mathbb{E}\Big[\sup_{t\in\mathbb{R}}\|\big(\langle n'|\otimes \matr{I}_S\big)\matr{P}_{I}(\matr{H})e^{-i\matr{H}t}\big(|n\rangle\otimes \matr{I}_S\big)\|\Big]\leq C e^{-|n-n'|/\zeta},
\end{equation}
\end{subequations}
where $|n\rangle$ and $|n'\rangle$ represent any superposition of configurations of $S'$ with exactly $n$ and $n'$ up-spins, respectively.

This yields an observable signature of the many-body localisation of the system: the magnetisation of any sublattice is constrained within an interval centered on its value at time $t=0$. More precisely, let $N_{S'}^\up(t)$ and $N_{S'}^\down(t)$ be the number of up- and down-spins in the sublattice $S'$ at time $t$, respectively. We define the disorder-averaged normalised magnetisation as $M(t)=\mathbb{E}\big[N_{S'}^\up(t)-N_{S'}^\down(t)\big]/|S'|$, with $\mathbb{E}[.]$ the expectation with respect to all random magnetic fields. In Appendix \ref{LocMagn} we show that this quantity is bounded at any time as follows
\begin{figure}[t!]
\vspace{2mm}
\hspace{-7mm}\includegraphics[width=1.105 \textwidth]{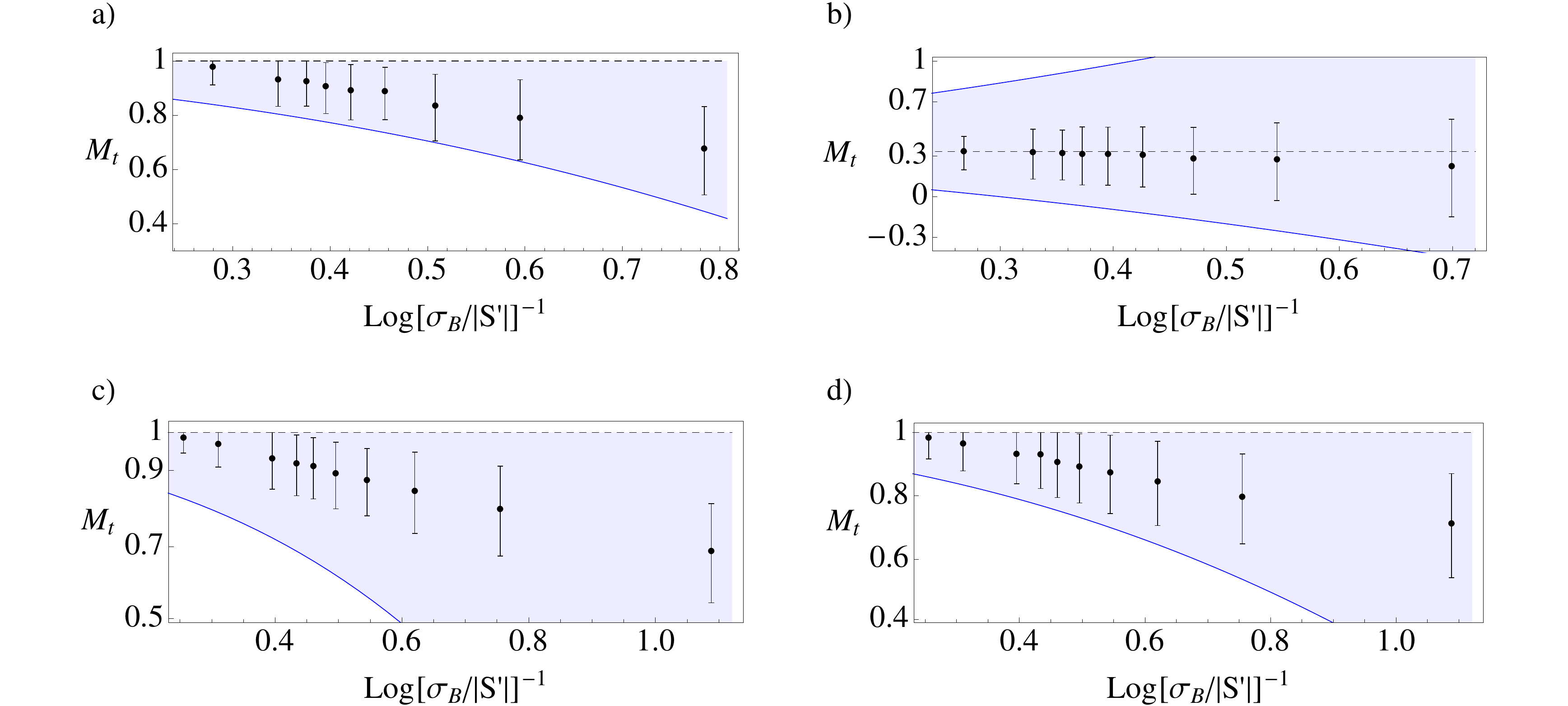}
\vspace{-2mm}
\caption{Data points: disorder-averaged normalised magnetisation $M(t)$ of various sublattices of a 1D lattice with $|\mathbb{S}|=60$ sites. Each error bar represents one standard deviation around the disorder average. Initially, spins located at even and odd sites are up and down, respectively. The sublattices $S'$ shown here comprise: 
\textbf{a)} sites 5 - 13 - 21 - 29 - 37 - 45 - 53, initial magnetisation $M(0)=1$; 
\textbf{b)} sites 13 - 21 - 28 - 40 - 45 - 57, initial magnetisation $M(0)=1/3$; 
\textbf{c)} sites 5  -  11  -  17  -  23  -  29  -  35  -  41  -  47  -  53  -  59, initial magnetisation $M_0=1$;
\textbf{d)} sites 9  -  19  -  29  -  39  -  49  -  59, initial magnetisation $M(0)=1$.
Each data point represents 300 TEBD simulations of $\mathbb{S}$ which is initially populated by 30 up-spins and 30 down-spins. Simulations parameters: XXZ regime $J_x=J_y=1$, $\Delta=1/2$, evolution time: $t=5 J_x^{-1}$, longer evolution times did not yield significantly different results.
In shaded is the region of allowed magnetisation $M(t)$ as per Eq.~(\ref{MagnetoBound}) as a function of $\ln(\sigma_B/|S'|)^{-1}\propto\zeta$. Constants $C$ and $\zeta$ where used as free parameters to fit all results.}\label{MagnetoTEBD}
\vspace{-2mm}
\end{figure}
\begin{figure}[t!]
\vspace{2mm}
\hspace{-7mm}\includegraphics[width=1.105 \textwidth]{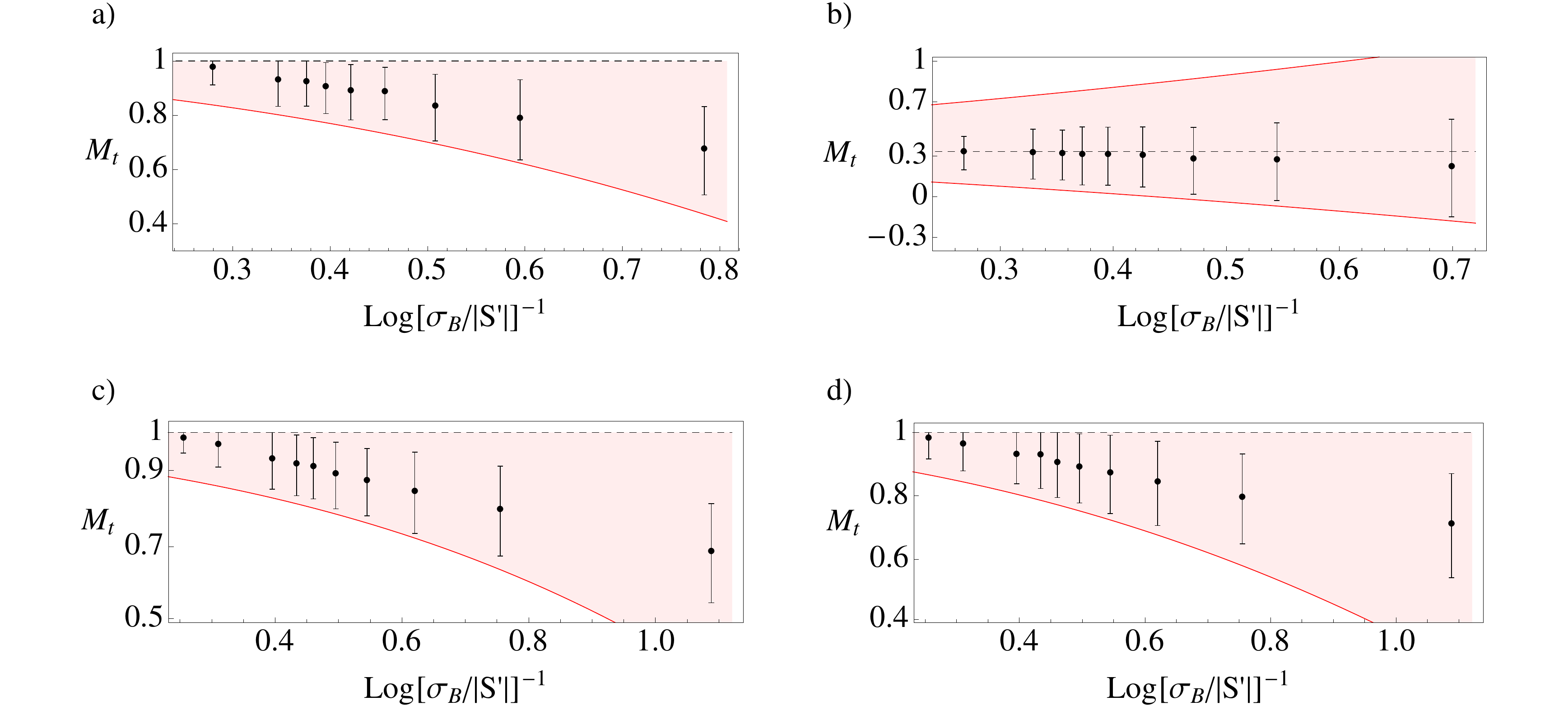}
\vspace{-2mm}
\caption{Same simulations as in Figure \ref{MagnetoTEBD}, but here fitting the values of $C$ and $\zeta$ independently for each figure. The TEBD results are consistent with localisation of the system in its configuration space.}\label{MagnetoTEBD2}
\end{figure}
\begin{equation}\label{MagnetoBound}
1-n^\down(\zeta)\leq M(t)\leq n^\up(\zeta)-1,
\end{equation}
with $n^\up(\zeta)$ and $n^\down(\zeta)$ two time-independent quantities depending on the localisation length $\zeta$ and constant $C$ of Eq.~(\ref{DynLoc1}):
\begin{subequations}
\begin{align}
n^\up(\zeta)&=C^2\sum_{m=0}^{|S'|}\frac{2m}{|S'|}\sum_{d=0}^{|S'|}\mathcal{N}(|S'|,n_0,m,d) e^{-2d/\zeta},\\
n^\down(\zeta)&=C^2\sum_{m=0}^{|S'|}\frac{2m}{|S'|}\sum_{d=0}^{|S'|}\mathcal{N}(|S'|,|S'|-n_0,m,d) e^{-2d/\zeta},
\end{align} 
\end{subequations}
with $n_0=N_{S'}^\up(0)$ the initial number of up-spins in $S'$ and
\begin{align}\label{DefofN2}
&\mathcal{N}(|S'|,n_0,m,d)=\begin{cases}\binom{n_0}{d/2+(n_0-m)/2}\binom{|S'|-n_0}{d/2-(n_0-m)/2},& d+n_0-m~\textrm{even},\\
0,&\textrm{otherwise.}
\end{cases}
\end{align}
In the strongly localised regime, the bound Eq.~(\ref{MagnetoBound}) becomes
\begin{equation}
\lim_{\zeta\to 0}1-n^\down(\zeta)=\lim_{\zeta\to 0}n^\up(\zeta)-1=M(0),
\end{equation}
that is the sublattice magnetisation is constrained around its initial value, as expected. On Figures (\ref{MagnetoTEBD}) and (\ref{MagnetoTEBD2}) we compare the bounds of Eq.~(\ref{MagnetoBound}) with TEBD simulations of a half-filled 1D lattice of 60 sites in the XXZ regime. Our implementation of the TEBD algorithm is based on the open-source Tensor Network Theory (TNT) library \cite{tnt}.
We use $C$ and $\zeta$ as free parameters to fit the simulation results with our theoretical predictions. This means that we do not use their values as predicted by Theorem \ref{MBtheorem}. This is because we could only simulate disorder strengths $\sigma_B$ which are below the bound $\sigma_B^{min}$ obtained in the theorem. 
In Fig. (\ref{MagnetoTEBD}) we determine the values of $C$ and $\zeta$ so as to fit the magnetisations of all observed sublattices at once. Our result are in good agreement with the simulated magnetisations, which demonstrates that the behaviour of this physical quantity is well captured by our bound Eq.~(\ref{MagnetoBound}). In particular, the simulations show that the sublattice magnetisation is strongly constrained around its initial value deep in the localised regime, as predicted by our theoretical bounds. 
In Fig. \ref{MagnetoTEBD2}, the values of $C$ and $\zeta$ are fitted for each sublattice separately. 
The excellent agreement between theory and simulation here shows that the overall dependency of the simulated magnetisation with the disorder is very well captured by our bounds.
%

These results demonstrate the validity of our criterion, which quantitatively predicts a physically observable signature of many-body localisation.
%
%

\subsection{Localisation of sublattice correlations and magnetic susceptibility}\label{loccorrmain}~\\

\noindent Similarly to the sublattice magnetisation presented above, we prove 
in Appendix \ref{Localcorrel} that once the criterion for many-body dynamical localisation Eq.~(\ref{DynLoc1}) is fulfilled, then the correlations between any two sites of the sublattice $S'$ will be localised around their initial value. The correlations at time $t$ between two sites $i$ and $j$ of $S'$ that we consider are given by 
\begin{equation}
\tau_{i,j}(t)=\mathbb{E}\big[\mathrm{Tr}[\matr{P}_I(\matr{H})\,\sigma_z^i(t)\,\matr{P}_I(\matr{H})\,\sigma_z^j \,\rho_{\mathbb{S}}]\big],\label{correla}
\end{equation}
where $\mathbb{E}[.]$ is the expectation with respect to all random magnetic fields, $\rho_{\mathbb{S}}$ is the initial density matrix of the entire system $\mathbb{S}$ and $\sigma_z^i(t)=e^{i\matr{H}t}\sigma_z^ie^{-i\matr{H}t}$. The quantity $\tau_{t}(i,j)$ is of interest because it relates to the disorder-averaged linear magnetic response function $\chi_{i,j}(t)$. More precisely we have
\begin{equation}
\chi_{i,j}(t)=-i\,\mathbb{E}\big[\langle[\sigma_z^i(t),\sigma_z^j(0)]\rangle\big]=-i\big(\tau_{i,j}(t)-\tau_{j,i}(-t)\big),
\end{equation}
which determines the response of the magnetisation at site $i$ to a time-dependent magnetic field applied at site $j$ \cite{forster}.

In the situation where $S'$ is initially in a single configuration $\alpha$ and is dynamically localised as per our criterion Eq.~(\ref{DynLoc1}), we obtain the following bounds for the sublattice correlations
\begin{figure}[t!]
\begin{center}
\vspace{2mm}
\includegraphics[width=.75 \textwidth]{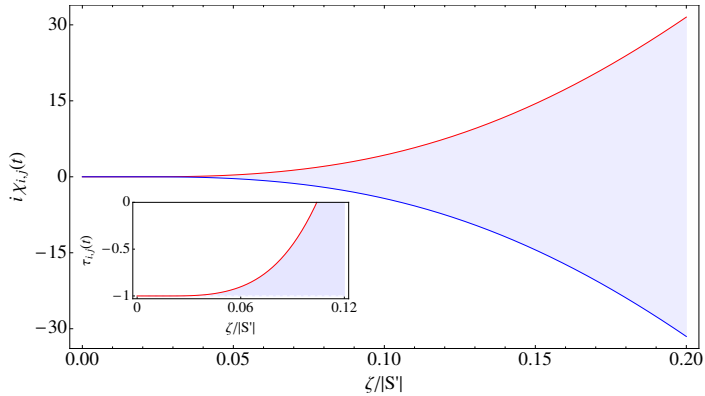}
\end{center}
\vspace{-2mm}
\caption{Bounds on the disorder-averaged magnetic linear response $i\chi_{i,j}(t)$ in the localised regime, as a function of the localisation length $\zeta$. The shaded region is the region allowed by the bound Eq.~(\ref{chibound})
We consider here a sublattice $S'$ comprising $|S'|=10$ sites. Initially, the sublattice is in a configuration $\alpha$ with $|\alpha|=3$ up spins and $i$ and $j$ are up and down in $\alpha$, respectively. In inset: bound on the disorder averaged correlation function $\tau_{i,j}(t)$, initially $\tau_{i,j}(0)=-1$. The shaded region is the region allowed by the bounds Eqs.~(\ref{BoundTau}).}\label{CorrelationsExample}
\end{figure}
\begin{subequations}\label{BoundTau}
\begin{equation}
\begin{cases}
\tau_{i,j}(t)\leq 2\mathcal{K}(|S'|,\,|\alpha|)-1,&\text{$j$ is up in }\alpha,\\
\tau_{i,j}(t)\geq -2\mathcal{K}(|S'|,\,|\alpha|)+1,&\text{$j$ is down in }\alpha,
\end{cases}
\end{equation}
and simultaneously
\begin{equation}
\begin{cases}
\tau_{i,j}(t)\geq -2\mathcal{Q}(|S'|,\,|\alpha|)+1,&\text{$j$ is up in }\alpha,\\
\tau_{i,j}(t)\leq 2\mathcal{Q}(|S'|,\,|\alpha|)-1,&\text{$j$ is down in }\alpha,
\end{cases}
\end{equation}
\end{subequations}
where $|\alpha|$ is the number of up-spins in $S'$ when it is in configuration $\alpha$ and $\mathcal{K}$ and $\mc{Q}$ are given by
\begin{subequations}
\begin{align}
&\hspace{1mm}\mathcal{K}(|S'|,\,|\alpha|)=\nonumber\\
&\hspace{-3mm}\begin{cases}
C^2\sum_{m=1}^{|S'|-1}\sum_{d=0}^{|S'|-1}\hspace{-1mm}\mathcal{N}(|S'|\hspace{-.3mm}-\hspace{-.3mm}1,|\alpha|\hspace{-.3mm}-\hspace{-.3mm}1,m\hspace{-.3mm}-\hspace{-.3mm}1,d)\, e^{-2d/\zeta},&\hspace{-.5mm}\text{$i$ is up in $\alpha$,}\\
C^2\sum_{m=1}^{|S'|-1}\sum_{d=1}^{|S'|-1}\hspace{-1mm}\mathcal{N}(|S'|-1,|\alpha|,m-1,d-1)\, e^{-2d/\zeta},&\hspace{-.5mm}\text{$i$ is down in $\alpha$,}\\
\end{cases}\\
&\hspace{1mm}\mathcal{Q}(|S'|,\,|\alpha|)=\nonumber\\
&\hspace{-3mm}\begin{cases}
C^2\sum_{m=1}^{|S'|-1}\sum_{d=1}^{|S'|-1}\hspace{-1mm}\mathcal{N}(|S'|-1,|\alpha|,m-1,d-1)\, e^{-2d/\zeta},&\hspace{-.5mm}\text{$i$ is up in $\alpha$,}\\
C^2\sum_{m=1}^{|S'|-1}\sum_{d=0}^{|S'|-1}\hspace{-1mm}\mathcal{N}(|S'|\hspace{-.3mm}-\hspace{-.3mm}1,|\alpha|\hspace{-.3mm}-\hspace{-.3mm}1,m\hspace{-.3mm}-\hspace{-.3mm}1,d)\, e^{-2d/\zeta},&\hspace{-.5mm}\text{$i$ is down in $\alpha$.}\\
\end{cases}
\end{align}
\end{subequations}
%
Now let $\tau_{i,j}^-(\zeta)$ and $\tau^+_{i,j}(\zeta)$ designate the bounds of Eqs.~(\ref{BoundTau}) obtained above for $\tau_{i,j}(t)$, that is
\begin{equation}
\tau^-_{i,j}(\zeta)\leq \tau_{i,j}(t)\leq \tau^+_{i,j}(\zeta).
\end{equation}
Then a bound on the disorder-averaged magnetic response function follows straightforwardly,
\begin{equation}\label{chibound}
\tau^+_{j,i}(\zeta)-\tau^-_{i,j}(\zeta)\leq i \chi_{i,j}(t)\leq \tau^+_{i,j}(\zeta)-\tau^-_{j,i}(\zeta).
\end{equation}
In particular, in the strong disorder limit, we find analytically that 
\begin{equation}\label{LimitTau}
\lim_{\zeta\to0}\tau^-_{i,j}(\zeta)=\lim_{\zeta\to0}\tau^+_{i,j}(\zeta)=\lim_{\zeta\to0}\tau_{i,j}(t)=\tau_{i,j}(0),
\end{equation}
and similarly $\lim_{\zeta\to0}\tau_{j,i}(-t)=\tau_{j,i}(0)$. In other terms, the correlations are constrained around their initial value, as expected.
Since furthermore $\tau_{i,j}(0)=\tau_{j,i}(0)$, Eqs.(\ref{chibound}, \ref{LimitTau}) prove that in the strongly localised regime the disorder-averaged magnetic response function vanishes
\begin{equation}
\lim_{\zeta\to 0} i\chi_{i,j}(t)=0.
\end{equation}
This implies that the disorder-averaged magnetic susceptibility $\tilde{\chi}_{i,j}(\omega)$, defined as the temporal Fourier transform of $\chi_{i,j}(t)$, decreases in magnitude as the disorder strength increases, and vanishes in the strongly localised regime. Therefore, in this regime the system displays a negligibly small induced magnetisation in response to a magnetic perturbation applied to a distant region of the lattice.
%
%
Examples of the bounds Eqs.~(\ref{BoundTau}) and Eq.~(\ref{chibound}) as a function of $\zeta$ are shown in Fig.~(\ref{CorrelationsExample}). Here again, our criterion for many-body dynamical localisation implies strong constraints on an observable physical quantity.

\section{Conclusion}\label{conclu}
We have established a natural criterion characterising dynamical localisation in  interacting many-body systems. Focusing on the XYZ Hamiltonian, which effectively does not conserve the number of particles in the system, we have shown explicitly that this criterion is fulfilled for a finite amount of disorder which scales at most exponentially with the system size. 
Furthermore, once fulfilled, the criterion leads to observable signatures of the localisation. In particular, we have bounded the magnetisation of any sublattice of a dynamically localised system and confirmed our predictions with TEBD simulations. Similarly, we have shown that the correlations between spins at the sites of the sublattice as well as the disorder-averaged magnetic linear response function are constrained in the localised regime. 

Our proof relies on a generalisation of the self-avoiding walk representation of the system Green's functions, called path-sums. This representation is independent of the system Hamiltonian or system partition and in fact holds for any matrix and matrix partitions \cite{Giscard2012}. For this reason, the proofs and results presented here in the specific case of the XYZ-Hamiltonian hold in a much more general setting. 
We hope that the path-sums representation, which allows the systematic manipulation of system Green's functions and generalised Green's functions, 
will contribute to the proof of localisation in infinite interacting many-body systems.

The bound for $\sigma_B^{\text{min}}$ we obtain in Theorem \ref{MBtheorem} remains well above the critical disorder observed in simulations, in particular in the XXZ regime where $J_x=J_y$. To tighten the bound further will certainly require the explicit use of characteristic features of the Hamiltonian. Similarly it may not be possible to prove dynamical localisation of infinite interacting many-body systems without using specific properties that only \textit{some} Hamiltonians exhibit. 
In particular, we believe that systems whose spectrum exhibits persisting gaps in the thermodynamic limit are well suited to proofs of dynamical localisation. Indeed in these situations, the moduli of the entries of the evolution operators are expected to \textit{deterministically} exhibit an exponential decay as a function of distance on the configuration-space, i.e. even in the absence of disorder \cite{Benzi1999,Benzi2007,BenziNEW}. While this effect is too weak to cause localisation by itself, it might help control the divergences that appear when bounding the disorder required for the onset dynamical localisation. Interestingly, recent results establish deterministic exponential decay of functions of matrices over $C^\ast$-algebra \cite{BenziManchester}. This implies deterministic bounds for the norms of the generalised Green's functions, which are required to work with non-trivial system partitions $S\neq\emptyset$.

\begin{acknowledgements} 
We thank Gesine Reinert for fruitful discussions.
P.-L. G. is funded by the EPSRC grant EP/K038311/1, Z. C. is funded by a departmental studentship of the Department of Statistics, University of Oxford, M.T. M. acknowledges support from the UK EPSRC via the Controlled Quantum Dynamics CDT. J.J.M.-A. acknowledges Departamento Administrativo de Ciencia, Tecnologa e Innovacion Colciencias for economic support.
\end{acknowledgements}

\appendix

\section{Proof of many-body dynamical localisation (Theorem \ref{MBtheorem})}\label{MBproof}
\subsection{A generalised fractional moment criterion}~\\

\noindent As we have seen in \S\ref{DynSEP} and \S\ref{CritMB}, localisation of the many-body dynamics is equivalent to the dynamical localisation of the SEP on its configuration space. This observation led to the following signature of many-body localisation Eq.~(\ref{DynLoc})
\begin{equation}\label{DynLocAppendix}
\mathbb{E}\Big[\sup_{t\in\mathbb{R}}\big\|\big(\langle\omega|\otimes \matr{I}_S\big) \matr{P}_{I}(\matr{H}) e^{-i \matr{H} t}\big(|\alpha\rangle \otimes\matr{I}_S\big)\big\|\Big]\leq C e^{-d(\alpha,\omega)/\zeta}.
\end{equation}
%
In order to demonstrate that there exists a finite value of the variance $\sigma_B$ of the random magnetic fields such that Eq.~(\ref{DynLocAppendix}) is satisfied, we use a generalisation of the fractional moment criterion originally developed by Aizenman and Molchanov in the one-body setting \cite{Aizenman1993,Aizenman2001}. This criterion  can be summarised as follows. Let $a$ and $b$ be two points of a lattice, then
\begin{equation}\label{ImplicationAizenman}
\begin{cases}
\exists C<\infty,~\zeta>0,~0<s<1~\text{with}\\
\mathbb{E}\Big[\big|G(a,b;z)\big|^s\Big]\leq C e^{-d/\zeta},
\end{cases}
\hspace{-3mm}\Longrightarrow
\begin{cases}
\exists C'<\infty,~\zeta'>0~\text{such that}\\
\mathbb{E}\Big[\sup_{t\in\mathbb{R}}\big|\langle a|\matr{P}_I e^{-i \matr{H} t}|b\rangle\big|\Big]\leq C' e^{-d/\zeta'}.
\end{cases}\nonumber
\end{equation}
where $G(a,b;z)$ is the Green's function evaluated between points $a$ and $b$.
For our purpose it is necessary to extend this criterion to arbitrary system partitions $\mathbb{S}=S\cup S'$. In particular, one must extend the notion of Green's function 
between two points of a lattice to that of Green's functions between two configurations accessible to $S'$. 

This is easily achieved using a tensor-product partition of the system resolvent $\matr{R}_{\matr{H}}(z)=\big(z\matr{I}-\matr{H}\big)^{-1}$. In continuity with existing conventions regarding the Green's functions, we denote the elements of this partition 
\begin{equation}
G_{\G_{S,S'}}(\omega,\alpha;z):=\big(\matr{R}_{\matr{H}}(z)\big)_{\omega\ot\alpha},
\end{equation}
and call them \textit{generalised Green's functions}. These are $2^{|S|}\times 2^{|S|}$ submatrices of the system resolvent $\matr{R}_{\matr{H}}(z)$. 
As required of a tensor-product partition of the resolvent, the generalised Green's functions fulfill the characteristic relation
\begin{subequations}
\begin{equation}
|\omega\rangle\langle\alpha|\otimes G_{\G_{S,S'}}(\omega,\alpha;z)=  \big(\matr{P}_\omega\otimes \matr{I}_S\big)\,.\,\matr{R}_{\matr{H}}(z)\,.\,\big( \matr{P}_{\alpha}\otimes \matr{I}_S\big),\label{CharacteristicTensorPartition}
\end{equation}
with $\matr{P}_\alpha=|\alpha\rangle\langle\alpha|$. Equivalently, 
\begin{equation}
G_{\G_{S,S'}}(\omega,\alpha;z)=  \big(\langle \omega|\otimes \matr{I}_S\big)\,.\,\matr{R}_{\matr{H}}(z)\,.\,\big(| \alpha\rangle\otimes \matr{I}_S\big).
\label{CharacteristicTensorPartition2}
\end{equation}
\end{subequations}
Using generalised Green's functions, Aizenman's fractional moment criterion can be extended to arbitrary system partitions:
\begin{lemma}[Fractional moment criterion for arbitrary system partitions]\label{FtoG}
Let $\mathbb{S}=S\cup S'$ be a system partition, $\G_{S,S'}$ the graph associated to it and $d(\alpha,\,\omega)$ the distance between vertices $v_{\alpha}$ and $v_{\omega}$ on $\G_{S,S'}$. If there exists $C<\infty$ and $\zeta<\infty$ such that, for $0<s<1$, 
\begin{equation}
\mathbb{E}\Big[\big\|G_{\G_{S,S'}}\big(\omega,\alpha;\,z\big)\big\|^s\Big]<Ce^{-d(\alpha,\,\omega)/\zeta},\label{LocalizG}
\end{equation}
then 
\begin{align}
&\mathbb{E}\bigg[\sup_{t\in\mathbb{R}}\Big\|\Big(\matr{P}_{I}(\matr{H}) e^{-i \matr{H} t}\Big)_{\omega\ot\alpha}\Big\|\bigg]\leq \frac{1}{2\pi}\big(2|I|\big)^{\frac{1}{2-s}}\,Ce^{-\frac{d(\alpha,\omega)}{(2-s)\zeta}},
\end{align}
%
where $|I|$ is the Lebesgue measure of $I$.
\end{lemma}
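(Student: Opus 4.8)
The plan is to carry out, in the block ($2^{|S|}\times 2^{|S|}$-matrix) setting, the Aizenman--Molchanov passage from a fractional-moment bound on the resolvent to dynamical localisation. First I would reduce the left-hand side to a spectral total variation. Writing $\matr{H}=\sum_n E_n\matr{P}_{E_n}$ for the (finite) spectral decomposition and setting $\matr{Q}_\omega:=\langle\omega|\otimes\matr{I}_S$, so that $\matr{Q}_\omega\matr{Q}_\omega^\dagger=\matr{I}_S$, one has $\big(\matr{P}_I(\matr{H})e^{-i\matr{H}t}\big)_{\omega\ot\alpha}=\sum_{n:\,E_n\in I}e^{-iE_nt}\,\matr{Q}_\omega\matr{P}_{E_n}\matr{Q}_\alpha^\dagger$, hence, since $|e^{-iE_nt}|=1$,
\begin{equation}
\sup_{t\in\mathbb{R}}\Big\|\big(\matr{P}_I(\matr{H})e^{-i\matr{H}t}\big)_{\omega\ot\alpha}\Big\|\ \le\ \sum_{n:\,E_n\in I}\big\|\matr{Q}_\omega\matr{P}_{E_n}\matr{Q}_\alpha^\dagger\big\|\ =:\ \|\nu_{\omega\alpha}\|(I),
\end{equation}
the $2$-norm total variation on $I$ of the matrix-valued spectral measure $\nu_{\omega\alpha}:=\sum_n\big(\matr{Q}_\omega\matr{P}_{E_n}\matr{Q}_\alpha^\dagger\big)\delta_{E_n}$ of the $(\omega\ot\alpha)$-block. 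From the tensor-product partition of the resolvent, $G_{\G_{S,S'}}(\omega,\alpha;z)=\sum_n(z-E_n)^{-1}\matr{Q}_\omega\matr{P}_{E_n}\matr{Q}_\alpha^\dagger$, so the matrices $\matr{Q}_\omega\matr{P}_{E_n}\matr{Q}_\alpha^\dagger$ are the residues of the generalised Green's function at the $E_n$, and by Stone's formula $\nu_{\omega\alpha}$ equals the distributional boundary jump $\tfrac{1}{2\pi i}\big[G_{\G_{S,S'}}(\omega,\alpha;E+i0)-G_{\G_{S,S'}}(\omega,\alpha;E-i0)\big]\,dE$; moreover $\|\nu_{\omega\alpha}\|(\mathbb{R})\le 2^{|S|}$ by the matrix Cauchy--Schwarz inequality applied to the diagonal (positive) blocks together with $\matr{Q}_\omega\matr{P}_I\matr{Q}_\omega^\dagger\le\matr{I}_S$.

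The heart of the argument is the matrix analogue of the Aizenman--Molchanov/de la Vallée-Poussin estimate: for every $0<s<1$ there is a numerical constant $k_s$ such that
\begin{equation}
\|\nu_{\omega\alpha}\|(I)\ \le\ \Big(k_s\int_I\big\|G_{\G_{S,S'}}(\omega,\alpha;E+i0)\big\|^{\,s}\,dE\Big)^{1/(2-s)}.
\end{equation}
The restriction $s<1$ is exactly what makes this viable: at each eigenvalue $E_n\in I$ the boundary value has a simple pole, so $\|G_{\G_{S,S'}}(\omega,\alpha;E+i0)\|$ behaves like $\mathrm{const}/|E-E_n|$ nearby, which is integrable to the power $s<1$ but not to the power $1$; and, once the expectation is taken, $s<1$ is also what keeps $\mathbb{E}\big[\|G_{\G_{S,S'}}\|^s\big]$ finite and bounded uniformly as $\operatorname{Im}z\downarrow0$, whereas $\mathbb{E}\big[\|G_{\G_{S,S'}}\|\big]$ need not be. To establish the estimate one regularises $E+i0\rightsquigarrow E+i\varepsilon$, isolates the contribution of each pole using the Cauchy--Schwarz bound above and the finiteness of $\mathbb{S}$, applies the scalar Aizenman--Molchanov inequality \cite{Aizenman1993,Aizenman2001} entrywise (or directly to the operator norm, using submultiplicativity), and then lets $\varepsilon\downarrow0$.

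It remains to take expectations. Since $2-s>1$, the map $x\mapsto x^{1/(2-s)}$ is concave, so Jensen's inequality moves $\mathbb{E}$ inside the power, Tonelli moves it inside $\int_I dE$, and Fatou lets one pass to the boundary value while keeping the $s$-th power outside the limit. Substituting the hypothesis $\mathbb{E}\big[\|G_{\G_{S,S'}}(\omega,\alpha;z)\|^s\big]<Ce^{-d(\alpha,\omega)/\zeta}$ gives $\int_I\mathbb{E}\big[\|G_{\G_{S,S'}}\|^s\big]\,dE\le|I|\,Ce^{-d(\alpha,\omega)/\zeta}$, whence
\begin{equation}
\mathbb{E}\bigg[\sup_{t\in\mathbb{R}}\Big\|\big(\matr{P}_I(\matr{H})e^{-i\matr{H}t}\big)_{\omega\ot\alpha}\Big\|\bigg]\ \le\ \frac{1}{2\pi}\big(2|I|\big)^{\frac{1}{2-s}}\,C\,e^{-\frac{d(\alpha,\omega)}{(2-s)\zeta}},
\end{equation}
the constant $k_s$ and the remaining numerical factors being absorbed into the prefactor $\tfrac{1}{2\pi}\big(2|I|\big)^{1/(2-s)}$ and the power of $C$ displayed in the statement.

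I expect the one genuine obstacle to be the interpolation step. For $\omega\neq\alpha$ the measure $\nu_{\omega\alpha}$ is a \emph{complex} (non-positive) matrix-valued measure, so the Herglotz/Poisson-kernel tools available for diagonal Green's functions do not apply directly; and near an eigenvalue $E_n$ the block $G_{\G_{S,S'}}(\omega,\alpha;E+i\varepsilon)$ is controlled only once $\varepsilon$ is small compared with the distance from $E_n$ to the rest of the spectrum, so the estimate has to be localised pole by pole with due care for near-degeneracies of $\matr{H}$ --- this is precisely where the matrix Cauchy--Schwarz bound on the off-diagonal blocks and the finiteness of $\mathbb{S}$ are needed. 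Everything else --- the spectral decomposition, Stone's formula, Jensen, Tonelli, Fatou and the substitution of the hypothesis --- is routine.
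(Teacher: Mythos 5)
Your overall strategy coincides with the paper's: reduce the supremum over $t$ to a sum over eigenvalues in $I$ of the block norms $\|\matr{Q}_\omega\matr{P}_{E_n}\matr{Q}_\alpha^{\dagger}\|$, control that sum by an interpolation with exponent $1/(2-s)$ against the fractional moment of the generalised Green's function, and finish with Jensen, Fubini and the hypothesis; the final bookkeeping (concavity of $x\mapsto x^{1/(2-s)}$, Tonelli, the factor $2|I|$ from the length of the contour/interval) matches the paper's.

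The gap sits exactly in the step you flag as the crux. The deterministic inequality
\begin{equation*}
\|\nu_{\omega\alpha}\|(I)\ \le\ \Big(k_s\int_I\big\|G_{\G_{S,S'}}(\omega,\alpha;E+i0)\big\|^{s}\,dE\Big)^{1/(2-s)}
\end{equation*}
with a constant $k_s$ depending only on $s$ is false as stated. Take a realisation with a single eigenvalue $E_0$ in $I$ and let $I$ shrink to an interval of length $\delta$ around $E_0$: the left side stays equal to $\|\matr{Q}_\omega\matr{P}_{E_0}\matr{Q}_\alpha^{\dagger}\|$, while $\int_I\|G\|^{s}dE\asymp\|\matr{Q}_\omega\matr{P}_{E_0}\matr{Q}_\alpha^{\dagger}\|^{s}\,\delta^{1-s}\to0$, so no universal $k_s$ can work. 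The inequality must carry a second factor, and that factor is precisely the $L^2$ ingredient that the paper retains: H\"{o}lder with the decomposition $1=s\cdot\tfrac{1}{2-s}+2\cdot\tfrac{1-s}{2-s}$ gives
\begin{equation*}
\int\|G\|\ \le\ \Big(\int\|G\|^{2}\Big)^{\frac{1-s}{2-s}}\Big(\int\|G\|^{s}\Big)^{\frac{1}{2-s}}
\end{equation*}
with respect to the suitably $\varepsilon$-regularised contour measure, and the regularised $L^2$ integral converges, via $\lim_{\varepsilon\downarrow0}(-i\varepsilon)\matr{R}_{\matr{H}}(\lambda+i\varepsilon)=\matr{P}_\lambda$, to $\sum_{\lambda\in\mathrm{Sp}(\matr{H})\cap I}\|\langle\omega|\matr{P}_\lambda|\alpha\rangle\|^{2}$, which the paper then bounds by a constant using Cauchy--Schwarz on the eigenvector overlaps. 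It is this Parseval-type factor, \emph{not} a universal $k_s$, that absorbs the dependence on the spectral configuration and makes the exponent $1/(2-s)$ come out on the fractional-moment integral alone. You do record an $L^2$-type fact ($\|\nu_{\omega\alpha}\|(\mathbb{R})\le2^{|S|}$ by matrix Cauchy--Schwarz), but you never feed it into the interpolation; once it is inserted as the second H\"{o}lder factor, your argument closes and essentially reproduces the paper's proof. Everything downstream of the interpolation (expectations, Fubini, substitution of the hypothesis, the prefactor $\tfrac{1}{2\pi}(2|I|)^{1/(2-s)}$) is fine as you describe it.
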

To preserve the flow of the argument, we defer the proof of this result to Appendix  \ref{BoundFracMom}.
Thanks to Lemma \ref{FtoG} above, we may focus our efforts on bounding fractional norms of the generalised Green's functions. We achieve this using their path-sums representation, which we present below.

\subsection{Path-sum representation of the generalised Green's functions}\label{PSGreens}~\\

\noindent To establish localisation of the fractional norms of the generalised Green's functions Eq.~(\ref{LocalizG}), we rely on a generalisation of the self-avoiding walk representation called path-sums. The path-sums representation of any piece of the resolvent $\matr{R}_{\matr{M}}$ of a matrix $\matr{M}$ holds generally for arbitrary partitions of that matrix and continues to hold even if the matrix under consideration has non-commuting elements, e.g. is valued in a division ring. The proof of the most general path-sums representation of matrix functions is given in \cite{Giscard2013} and shall not be reproduced here. The path-sums approach reduces to the self-avoiding walk representation in the one-body situation where the configuration graph of the system is the physical lattice $\mathbb{Z}^d$, but otherwise differs from it. In its most general form, the method of path-sums stems from a fundamental algebraic property of the ensemble of all walks (also known as paths) on (di)graphs: any walk factorises uniquely into products of prime walks, which are the simple paths and simple cycles of $\G$, which on $\mathbb{Z}^2$ are also known as self-avoiding walks and self-avoiding polygons, respectively. The path-sums representation of a matrix resolvent is then obtained from the prime factorisation of the characteristic series of all walks on the (di)graph $\G$. For a full exposition of the algebraic structure of walk ensembles at the origin of the path-sums representation, see \cite{Giscard2012}.

In the case of interest here, the path-sums representation of the generalised Green's function is a superposition of matrix weighted simple paths,
\begin{equation}\label{ODiagG}
G_{\G_{S,S'}}(\omega,\alpha;\,z)=\sum_{\Pi_{\G;\alpha\omega}}\prod_{j=1}^\ell \matr{H}_{\alpha_{j+1}\ot\alpha_j} G_{\G^j_{S,S'}}(\alpha_j,\alpha_j;\,z),
\end{equation}
where $\Pi_{\G;\alpha\omega}$ is the ensemble of simple paths from $v_{\alpha}$ to $v_{\omega}$ on $\G$\footnote{Walks and simple paths are labeled from left to right, here from $\alpha$ to $\omega$, but their weights are matrix products from right to left; hence the notations $\Pi_{\G;\alpha\omega}$ and $G_{\G_{S,S'}}(\omega,\alpha;\,z)$.}, and $(\alpha_1\alpha_2\cdots \alpha_\ell)\in \Pi_{\G;\,\alpha\omega}$ with $\alpha\equiv \alpha_1$ and $\omega\equiv \alpha_\ell$. Recall that a simple path, being a self-avoiding walk, is forbidden from visiting any vertex more than once.
The Green's functions $G_{\G^j_{S,S'}}(\alpha_j,\alpha_j;\,z)$ are entries of the resolvent $\matr{R}_{\matr{H}_{\mathfrak{I}_j}}(z)$, with $\matr{H}_{\mathfrak{I}_j}$ a submatrix of the Hamiltonian $\matr{H}$ obtained upon deleting all the rows and columns of $\matr{H}$ corresponding to configurations $\{\matr{I}_S\otimes |\alpha_i\rangle\}_{i\in \mathfrak{I}_j}$,  $\mathfrak{I}_{j}=\{\alpha_1\cdots \alpha_{j-1}\}$. For the SEP, this is equivalent to living on the subgraph $\G^j_{S,S'}$ of $\G_{S,S'}$ obtained upon deleting vertices $\{v_{\alpha_1},\cdots, v_{\alpha_{j-1}}\}$ of $\G_{S,S'}$. The statics of the resolvent are given by 
\begin{subequations}\label{GSigma}
\begin{equation}
G_{\G_{S,S'}}(\alpha,\alpha;\,z)=\left[z\matr{I}-\matr{H}_{\alpha}-\Sigma_{\G_{S,S'};\,\alpha}\right]^{-1},
\end{equation}
and in general
\begin{equation}
G_{\G^j_{S,S'}}(\alpha_j,\alpha_j;\,z)=\left[z\matr{I}-\matr{H}_{\alpha_j}-\Sigma_{\G^j_{S,S'};\,\alpha_j}\right]^{-1},
\end{equation}
\end{subequations}
with $\Sigma_{\G^j_{S,S'};\,\alpha_j}$ the self-energy of the SEP when located on vertex $v_{\alpha_j}$ of $\G^j_{S,S'}$. It is convenient to make the configuration potentials appear explicitly in Eqs.~(\ref{GSigma}).
Using Eq.~(\ref{StaticFormY}) we have
\begin{equation}\label{eq:formG}
G_{\G^j_{S,S'}}(\alpha_j,\alpha_j;z)=\Big[Y_{\alpha_j}\matr{I}_S+\tilde{\Sigma}_{\G^j_{S,S'};\,\alpha_j}\Big]^{-1},
\end{equation}
where $\tilde{\Sigma}_{\G^j_{S,S'};\,\alpha_j}=z\matr{I}-\tilde{\matr{H}}_{\alpha_j}-\Sigma_{\G^j_{S,S'};\,\alpha_j}$ is an effective self-energy. It depends on the random $B$-fields at sites outside of $S'$. 

We now turn to the properties of the self-energy.
Regardless of the partition, the self-energy has an explicit path-sums representation \cite{Giscard2013}
\begin{equation}
\Sigma_{\G^j_{S,S'};\,\alpha_j}=\sum_{\Gamma_{\G^{j}_{S,S'};\alpha_j}}\prod_{k=1}^{\ell'} \matr{H}_{\mu_{k+1}\ot\mu_k} G_{\G^{j,k}_{S,S'}}(\mu_k,\mu_k;\,z),\label{Sigmaalpha}
\end{equation}
where $\Gamma_{\G^j_{S,S'};\alpha_j}$ is the ensemble of simple cycles, on $\G^j_{S,S'}$ off vertex $\alpha_j$ and $(\mu_1\cdots \mu_{\ell'})\in\Gamma_{\G^j_{S,S'};\alpha_j}$ with $\alpha_j\equiv \mu_1\equiv \mu_{\ell'}$. Note that the Green's functions $G_{\G^{j,k}_{S,S'}}(\mu_k,\mu_k;\,z)$ appearing in the self-energy fulfill Eqs.~(\ref{GSigma}) on $\G^{j,k}_{S,S'}\equiv \G\backslash\{\alpha_1,\cdots\alpha_j,\mu_2,\cdots \mu_k\}$ and thus involve self-energies via Eq.~(\ref{Sigmaalpha}). As an immediate consequence of this observation, all the self-energies are matrix-valued continued fractions.  As long as $|S'|$ is finite, the depth of the continued fraction is finite and equal to $\ell^j_{\alpha_j}+\delta_{\matr{H}_{v_{\ell_p}}\neq 0}$, the length $\ell^j_{\alpha_j}$ of the longest simple path on $\G^j_{S,S'}$ starting at $\alpha_j$ plus one if the last vertex reached by this simple path has a non-zero statics $\matr{H}_{v_{\ell_p}}\neq 0$ \cite{Giscard2012}. 
%
%
%
%
%
%
%
%
%
%


\subsection{Bounding the norms of generalised Green's functions}\label{Statapp}~\\

\noindent Following the result of Lemma \ref{FtoG}, we are left with the task of proving that there exists an amount of disorder $\sigma_B$ such that the fractional norms of generalised Green's functions are exponentially bounded, i.e. 
\begin{equation}\label{BoundGOAL}
\mathbb{E}\left[\big\|G_{\G_{S,S'}}\big(\omega,\alpha;z\big)\big\|^s\right]\leq Ce^{-d(\alpha,\omega)/\zeta},
\end{equation}
with $s\in(0,1)$, $C<\infty$ and most importantly $0<\zeta<\infty$. 

%
%
In the spirit of the proof for the one-body situation \cite{Hundertmark2007}, we bound the expectation of the fractional moments of the generalised Green's functions using their path-sums representation
\begin{align}\label{ODiagGbound}
&\mathbb{E}\Big[\big\|G_{\G_{S,S'}}(\omega,\alpha;\,z)\big\|^s\Big]\nonumber\\
&\hspace{10mm}\leq\mathbb{E}\Big[\sum_{\Pi_{\G;\alpha\omega}}\|G_{\G^\ell_{S,S'}}(\alpha_\ell,\alpha_\ell;\,z)\|^s\prod_{j=0}^{\ell-1} \big\|\matr{H}_{\alpha_{j+1}\ot\alpha_j} G_{\G^j_{S,S'}}(\alpha_j,\alpha_j;\,z)\big\|^s\Big],\nonumber\\
&\hspace{10mm}\leq\sum_{\ell\geq d(\alpha,\omega)}^{L_p} |\Pi_{\G_{S,S'};\alpha\omega}|(\ell)\, (2\mc{J})^{\ell s}\, \mathbb{E}\Big[\prod_{j=0}^\ell \big\|G_{\G^j_{S,S'}}(\alpha_j,\alpha_j;\,z)\big\|^s\Big].
\end{align}
In these expressions we use $\|\matr{H}_{\alpha_{j+1}\ot\alpha_j}\|\leq 2\mc{J}$ for all flips of the Hamiltonian. The quantity $|\Pi_{\G_{S,S'};\alpha\omega}|(\ell)$ is the number of simple paths of length $\ell$ on $\G_{S,S'}$ from $\alpha$ to $\omega$ and $L_p$ is the length of the longest simple path on $\G_{S,S'}$. As long as $|S'|$ is finite, $L_p$ is finite. 
In most cases, the number of simple paths is unknown and we may simply upper bound it by the total number of walks of length $\ell$ on $\G_{S,S'}$ from $\alpha$ to $\omega$, denoted $|W_{\G_{S,S'};\alpha\omega}|(\ell)$, or if this is also unknown by powers of the largest eigenvalue $\lambda_{\G_{S,S'}}$ of $\G_{S,S'}$. 
%
%
In order to progress, it is necessary to identify explicitly the configuration graph $\G_{S,S'}$. 
%
In fact, if $S'$ comprises only non-adjacent sites, then $\G_{S,S'}$ must be the $|S'|$-hypercube $\mc{H}_{|S'|}$. 
\begin{wrapfigure}{r}{0.3\textwidth}
 \vspace{-11.5mm}
  \begin{center}
    \includegraphics[width=0.3\textwidth]{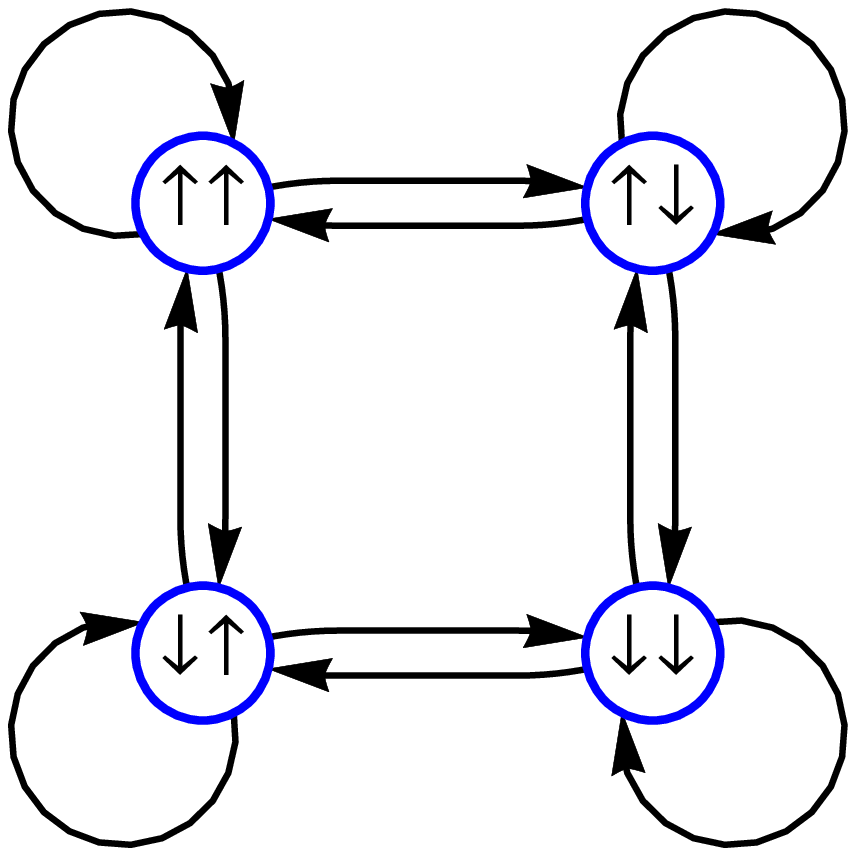}
  \end{center}
  \vspace{-5.5mm}
  \caption{Graph $\mc{H}_2$ \label{figH2}}
  \vspace{-5.5mm}
\end{wrapfigure}
To understand why, remark that the $|S'|$-hypercube connects all configurations of $S'$ differing by one spin flip. For example, the configuration graph $\mathcal{H}_2$ is shown on Fig.~\ref{figH2}. Thus, if $\G_{S,S'}$ is neither the $|S'|$-hypercube nor a subgraph of it, then there is at least one transition allowed between two configurations of the $S'$ differing by two spin flips, e.g. $\up\up\longrightarrow\down\down$. This corresponds to two \emph{non-adjacent} spins flipping at the same time, which is not allowed by the XYZ Hamiltonian. Thus, from now on we shall consider that the configuration space is the $|S'|$-hypercube $\G_{S,S'}\equiv\mc{H}_{|S'|}$, this being the case least favorable to localisation\footnote{The $|S'|$-hypercube being the largest and most connected configuration graph allowed by the Hamiltonian, it has the largest number of simple paths and simple cycles. By Eq.~(\ref{ODiagGbound}), this is the least favorable case for localisation .}. The number of simple paths of length $\ell$ on $\mc{H}_{|S'|}$ is bounded by the number of walks of length $\ell$ on $\mc{H}_{|S'|}$. This is analytically known for any two vertices of any hypercube. For the sake of simplicity however, we use the larger bound $|\Pi_{\G_{S,S'};\alpha\omega}|(\ell)\leq|S'|^\ell$.


As noted in \S\ref{MainResMB}, a major difficulty in the study of many-body localisation as compared to  one-body localisation lies in that even when walking along a simple path on $\G_{S,S'}$, the SEP encounters configuration potentials that are linearly dependent on the ones it previously encountered. 
Indeed, since there are only $|S'|$ random $B$-fields affecting the sites of $S'$, there are only $|S'|$ linearly independent configuration potentials. By contrast, the configuration graph $\G_{S,S'}$ has $2^{|S'|}$ vertices.
A negative consequence of this is that \emph{all} the self-energies entering \emph{all} the generalised Green's functions depend on \emph{all} configurations potentials. Thus, contrary to the one-body case, the generalised Green's functions have a complicated dependency on the configuration potentials. This dependency is analytically accessible for sufficiently small $S'$, e.g. $|S'|=1-6$, yet analytical calculations of the required expectations remain very difficult as we shall see below. 

Another consequence of the linear dependency between the configuration potentials is that we cannot distribute the conditional expectations in the product of generalised Green's functions as is possible in the one-body case. This further complicates the calculation of the required expectation along a simple path $\mathbb{E}\Big[\prod_{j=0}^\ell \big\|G_{\G^j_{S,S'}}(\alpha_j,\alpha_j;\,z)\big\|^s\Big]$. 
In order to recover simpler expressions, we note that the expectation above involves the product measure $d\varrho_{\{B_j\}_{j\notin S'}}\prod_{j=1}^{|S'|}d\varrho_{Y_{\alpha_j}}$, with $\varrho_{Y_{\alpha_j}}$ the cumulative distribution function of $Y_{\alpha_j}$ . Therefore, we can upper bound the expectation of a product of norms of generalised Green's functions by a product of expectations on using the generalised H\"{o}lder inequality \cite{Finner1992}. 
%
%
This leads to 
\begin{align}
&\mathbb{E}\Big[\prod_{j=0}^\ell \big\|G_{\G^j_{S,S'}}(\alpha_j,\alpha_j;\,z)\big\|^s\Big]\leq \prod_{j=0}^\ell\left(\mathbb{E}\Big[ \big\|G_{\G^j_{S,S'}}(\alpha_j,\alpha_j;\,z)\big\|^{q_js}\Big]\right)^{1/q_j},\label{Holderstep}
\end{align}
with $\sum_{j=0}^{\ell} q_j^{-1}=1$. We will see below that the above can only be bounded if $q_j s<1$ for all $j$. Using $q_j=\ell+1$ for all $j$ we thus need $s<1/(\ell+1)$ to guarantee the existence of the above expectations. Since the graph $\G_{S,S'}$ is a subgraph of $\mathcal{H}_{|S'|}$, the longest simple path has length at most $\ell=L_p\leq2^{|S'|}$ and $s<1/(2^{|S'|}+1)$ is sufficient to guarantee existence of all expectations. We will see later that this requirement on $s$ can be waived using a technical result from Ref. \cite{Aizenman2009}.

We now turn to bounding expectations of individual generalised Green's functions. Using conditional expectations we have
\begin{align}
&\mathbb{E}\Big[ \big\|G_{\G^j_{S,S'}}(\alpha_j,\alpha_j;\,z)\big\|^{s_j}\Big]=\mathbb{E}_{\{B_j\}_{j\notin S'},\,\{Y_{\alpha_i}\}_{i\neq j}}\bigg[\mathbb{E}_{Y_{\alpha_j}|}\Big[ \big\|G_{\G^j_{S,S'}}(\alpha_j,\alpha_j;\,z)\big\|^{s_j}\Big]\bigg],\label{ExpecYj}
\end{align}
with $s_j=q_j s$ and $\mathbb{E}_{Y_{\alpha_j}|}[.]$ is the expectation with respect to $Y_{\alpha_j}$ knowing the value of $|S'|-1$ linearly independent configuration potentials affecting $S'$, here denoted $\{Y_{\alpha_i}\}_{i\neq j}$, as well as all magnetic fields affecting sites outside of the sublattice $S'$.

As noted earlier, thanks to path-sum, for sufficiently small $|S'|$ (typically less than 6) the exact dependency of a generalised Green's functions on a certain configuration potential is explicitly accessible. Yet, this dependency is still too complicated to be used to bound the expectation of the Green's function with respect to this configuration potential. To see this, let us consider again the simple example where $S'$ comprises only two sites, denoted 1 and 2. In this situation the configuration graph of $|S'|$ is the square $\mc{H}_2$ and the configurations available to $S'$ are $\up\up$, $\up\down$, $\down\up$ and $\down\down$, see Fig.~(\ref{figH2}). The corresponding configuration potentials are $Y_{\up\up}=B_1+B_2$, $Y_{\up\down}=B_1-B_2$, $Y_{\down\up}=-Y_{\up\down}=-B_1+B_2$ and $Y_{\down\down}=-Y_{\up\up}=-B_1-B_2$. Then
\begin{align}\label{S2G}
G_{\mc{H}_2}(\up\up,\,\up\up;\,z)=\Big[z\matr{I}&\left.-Y_{\up\up}\matr{I}-\tilde{\matr{H}}_{\up\up}
-4J^2\,\Sigma_{\up\down}(Y_{\up\up})-4J^2\,\Sigma_{\down\up}(Y_{\up\up})\right.\nonumber\\
&-16J^4\,\Sigma^{\up\down}_{\down\up}\,\Sigma_{\down\down}^{\up\down}(Y_{\up\up})\,\Sigma_{\up\down}(Y_{\up\up})\nonumber\\
&\left.-16J^4\,\Sigma^{\down\up}_{\up\down}\,\Sigma_{\down\down}^{\down\up}(Y_{\up\up})\,\Sigma_{\down\up}(Y_{\up\up})\right]^{-1},
\end{align}
where $\Sigma_{\up\down}(Y_{\up\up})=\big[z\matr{I}-Y_{\up\down}\matr{I}-\tilde{\matr{H}}_{\up\down}-\Sigma^{\up\down}_{\down\down}(Y_{\up\up})\big]^{-1}$, $\Sigma^{\up\down}_{\down\down}(Y_{\up\up})=\big[z\matr{I}+Y_{\up\up}\matr{I}-\tilde{\matr{H}}_{\down\down}-\Sigma^{\up\down}_{\down\up}\big]^{-1}$ and similarly for $\Sigma_{\down\up}(Y_{\up\up})$ and $\Sigma^{\down\up}_{\down\down}(Y_{\up\up})$. The quantities $\Sigma^{\up\down}_{\down\up}$ and $\Sigma^{\down\up}_{\up\down}$ are analytically available and do not depend on $Y_{\up\up}$. The generalised Green's function $G_{\mc{H}_2}(\up\up,\,\up\up;\,z)$ is therefore an exactly known matrix-valued rational function of $Y_{\up\up}$. For this later reason, it is very difficult to find an analytical upper bound for $\mathbb{E}_{Y_{\up\up}}\big[\|G_{\mc{H}_2}(\up\up,\,\up\up;\,z)\|^s\big]$, in particular a bound which is independent of largely unknown quantities such as the spectrum of the Hamiltonian.

Thus, instead of attempting to bound the expectation of a generalised Green's functions $G_{\G^j_{S,S'}}(\alpha_j,\alpha_j;\,z)$, we remark that it is a submatrix of a larger matrix whose dependency on $Y_{\alpha_j}$ is much simpler. To see this, consider merging all the vertices of $\G^j_{S,S'}$ whose configuration potentials are linearly dependent on $Y_{\alpha_j}$. We call this operation an $Y_{\alpha_j}$-collapse and the so created vertex, the collapsed vertex, which we denote $v_{Y_{\alpha_j}}$. An illustration of this procedure in shown on Fig.~(\ref{CollapseY}). 
\begin{figure*}[t!]
\begin{center}
\vspace{-0mm}
\includegraphics[width=.8 \textwidth]{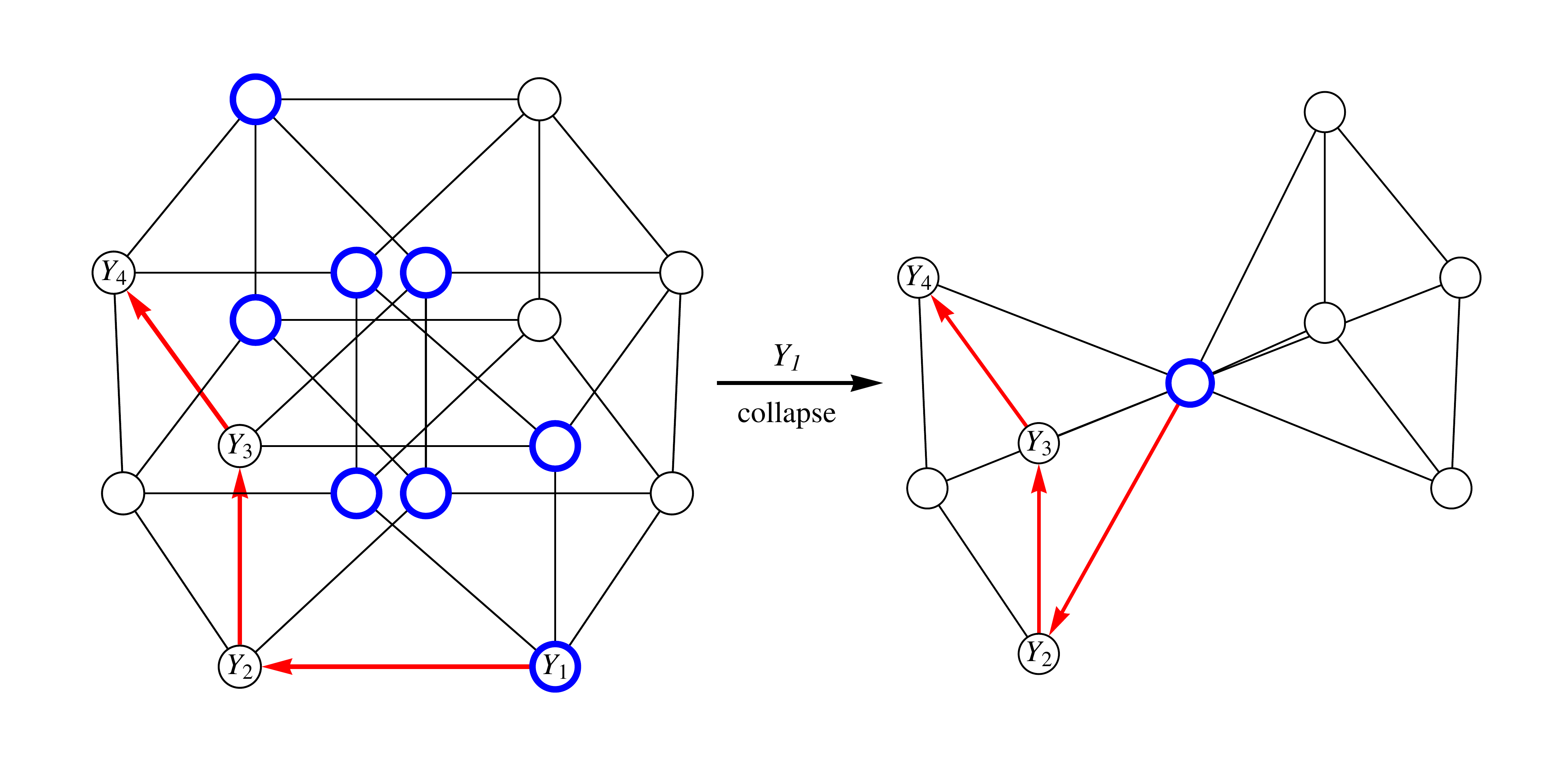}
\end{center}
\vspace{-5mm}
\caption[Collapsing $\mathcal{H}_4$ along configuration potential $Y_1$.]{Left: configuration graph $\mathcal{H}_4$ of a sublattice $S'$ comprising 4 sites. The red arrows represent a simple path along which $S'$ successively encounters the random configuration potentials $Y_{1}\to Y_2\to Y_3 \to Y_4$. In thick blue are the configurations of $S'$ whose random potential is linearly dependent on $Y_1$. Right: configuration graph of the sublattice after collapsing all configurations whose random potential is linearly dependent on $Y_1$ into a single vertex. Note: for clarity the self-loops are not represented on these graphs.}\label{CollapseY}
\vspace{-2mm}
\end{figure*}
The $Y_{\alpha_j}$-collapse gives rise to a new graph, denoted $\G^j_{S,S'}(Y_{\alpha_j})$, from which we immediately obtain 
the generalised Green's function of the collapsed vertex as
\begin{equation}
G_{\G_{S,S'}^j(Y_{\alpha_j})}(v_{Y_{\alpha_j}},v_{Y_{\alpha_j}};\,z)=\left[z\matr{I}-Y_{\alpha_j}\matr{I}-\Sigma_{Y_{v_{\alpha_j}}}\right]^{-1},
\end{equation}
where $\Sigma_{v_{Y_{\alpha_j}}}$ is the self-energy associated to the ensemble of configurations which are linearly dependent on $Y_{\alpha_j}$, i.e. it is
 the sum of all cycles off $v_{Y_{\alpha_j}}$ on $\G^j_{S,S'}(Y_{\alpha_j})\backslash\{v_{Y_{\alpha_j}}\}$. Since all vertices whose configuration potential is linearly dependent on $Y_{\alpha_j}$ have been merged to form $v_{Y_{\alpha_j}}$, none of the remaining vertices of $\G^j_{S,S'}(Y_{\alpha_j})\backslash\{v_{Y_{\alpha_j}}\}$ depend on $Y_{\alpha_j}$. Consequently, $\Sigma_{v_{Y_{\alpha_j}}}$ is independent of $Y_{\alpha_j}$. Furthermore, $G_{\G^j_{S,S'}}(\alpha_j,\alpha_j;\,z)$ is a submatrix of $G_{\G_{S,S'}^j(Y_{\alpha_j})}(v_{Y_{\alpha_j}},v_{Y_{\alpha_j}};\,z)$: more precisely it is the submatrix corresponding to $S'$ being in configuration $\alpha_j$. From these observations we deduce
\begin{equation}
\|G_{\G^j_{S,S'}}(\alpha_j,\alpha_j;\,z)\|^{s_j}\leq\big\|G_{\G_{S,S'}^j(Y_{\alpha_j})}(v_{Y_{\alpha_j}},v_{Y_{\alpha_j}};\,z)\big\|^{s_j},
\end{equation}
and therefore
\begin{equation}
\mathbb{E}_{Y_{\alpha_j}|}\Big[\|G^j_{\G_{S,S'}}(\alpha_j,\alpha_j;\,z)\|^{s_j}\Big]\leq\mathbb{E}_{Y_{{\alpha_j}|}}\Big[\big\|G_{\G_{S,S'}^j(Y_{\alpha_j})}(v_{Y_{\alpha_j}},v_{Y_{\alpha_j}};\,z)\big\|^{s_j}\Big].\label{IndividualBound1}
\end{equation}
The right hand side of the above is bounded thanks to the following lemma, whose proof we defer to Appendix \ref{ProofExpectNorm}:
\begin{lemma}\label{ExpectationProp}
Let $Y$ be a normally distributed random variable with density function $\rho$.
Let $\matr{A}\in\mathbb{C}^{n\times n}$ be a normal matrix. Then the following bound holds for any $0<s<1$,
\begin{equation}
\mathbb{E}_Y\Big[\big\|[Y\matr{I}+\matr{A}]^{-1}\big\|^{s}\Big]\leq\frac{(2n)^s}{1-s}\|\rho\|_{\infty}^s,
\end{equation}
where $\|.\|$ is the 2-norm and $\|.\|_\infty$ the infinity norm.
\end{lemma}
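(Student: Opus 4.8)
The plan is to reduce the matrix estimate to a one-dimensional integral via the spectral theorem, and then to control that integral by a layer-cake (distribution-function) argument that uses only the boundedness of the density $\rho$.

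First I would diagonalise the normal matrix $\matr{A}$ as $\matr{A}=\matr{U}\,\mathrm{diag}(\lambda_1,\dots,\lambda_n)\,\matr{U}^{\ast}$ with $\matr{U}$ unitary. Then $Y\matr{I}+\matr{A}=\matr{U}\,\mathrm{diag}(Y+\lambda_1,\dots,Y+\lambda_n)\,\matr{U}^{\ast}$ is still normal, and since the $2$-norm is unitarily invariant and a diagonal matrix has $2$-norm equal to the largest modulus of its entries, one gets the exact identity
\[
\big\|[Y\matr{I}+\matr{A}]^{-1}\big\|=\frac{1}{\min_{1\le k\le n}|Y+\lambda_k|}=\frac{1}{\mathrm{dist}\big(-Y,\ \mathrm{spec}(\matr{A})\big)}.
\]
It is precisely here that normality of $\matr{A}$ is essential: for a general matrix the resolvent norm can vastly exceed the reciprocal of the distance to the spectrum, and the conclusion would be false.

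Second, writing $\lambda_k=a_k+ib_k$ and setting $R:=\big\|[Y\matr{I}+\matr{A}]^{-1}\big\|$, I would pass to the distribution function of the scalar random variable $R$. For any $\lambda>0$ the event $\{R>\lambda\}$ is exactly $\{\,Y\in\bigcup_{k=1}^{n}I_k(\lambda)\,\}$, where $I_k(\lambda)=\{\,y\in\mathbb{R}:|y+\lambda_k|<1/\lambda\,\}$ is either empty or a real interval of length $2\sqrt{\lambda^{-2}-b_k^{2}}\le 2/\lambda$; hence the union has Lebesgue measure at most $2n/\lambda$. Since $\rho$ is a bounded probability density, this gives the two-sided estimate
\[
\mathbb{P}(R>\lambda)\ \le\ \min\!\Big(1,\ \tfrac{2n\,\|\rho\|_{\infty}}{\lambda}\Big).
\]
Finally I would insert this into the layer-cake identity $\mathbb{E}_Y[R^{s}]=\int_{0}^{\infty}\mathbb{P}(R>\lambda)\,s\lambda^{s-1}\,\mathrm{d}\lambda$, split the integral at $\lambda_0:=2n\|\rho\|_{\infty}$, and evaluate the two elementary pieces: the tail integral $\int_{\lambda_0}^{\infty}\lambda^{s-2}\,\mathrm{d}\lambda$ converges exactly because $s<1$, and the two contributions sum to $\lambda_0^{s}/(1-s)=(2n)^{s}\|\rho\|_{\infty}^{s}/(1-s)$, which is the claimed bound.

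The one genuine point of care — the main obstacle — is to keep the geometry of the superlevel set $\{R>\lambda\}$ intact, treating it as a union of $n$ intervals of \emph{total} length $O(1/\lambda)$, rather than falling back on the naive pointwise bound $(\min_k|Y+\lambda_k|)^{-s}\le\sum_k|Y+\lambda_k|^{-s}$. The latter only yields the weaker constant $n\,2^{s}/(1-s)$ instead of the sharp $(2n)^{s}/(1-s)$, because the factor $n$ is not improved to $n^{s}$ once the expectations are summed separately. With the union-of-intervals estimate in hand, everything else — the spectral reduction and the final integration — is routine.
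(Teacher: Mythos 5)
Your proof is correct and follows essentially the same route as the paper's: diagonalise the normal matrix to reduce the resolvent norm to $\big(\min_k|Y+\lambda_k|\big)^{-1}$, bound the measure of the superlevel set by $2n\|\rho\|_\infty/\lambda$, and integrate via the layer-cake formula. The only cosmetic differences are that you parametrise the layer-cake in $\lambda$ rather than $t=\lambda^{s}$ and bound the bad set directly as a union of $n$ intervals, whereas the paper partitions $\mathbb{R}$ into regions where each eigenvalue attains the minimum — both yield the identical estimate and the same constant $(2n)^{s}\|\rho\|_\infty^{s}/(1-s)$.
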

Using the lemma and on assuming that $s_j<1$, we obtain 
\begin{align}
\mathbb{E}_{Y_{{\alpha_j}|}}\Big[\big\|G_{\G_{S,S'}^j(Y_{\alpha_j})}(v_{Y_{\alpha_j}},v_{Y_{\alpha_j}};\,z)\big\|^{s_j}\Big]
&\leq\frac{(2D_j)^{s_j}}{1-s_j}\|\rho_{Y_{{\alpha_j}|}}\|_{\infty}^{s_j}.\label{individualbound}
\end{align}
The quantity $D_j$ is the dimension of $G_{\G_{S,S'}^j(Y_{\alpha_j})}(v_{Y_{\alpha_j}},v_{Y_{\alpha_j}};\,z)$ or, in other terms, $D_j$ is the rank of the perturbation associated with modifications of $Y_{\alpha_j}$. We bound $D_j$ on noting that: i) each vertex of $\G_{S,S'}$ is associated with a vector space of dimension $2^{|S|}$; ii) at most half of the vertices of $\G_{S,S'}$ are linearly dependent on any given configuration potential.
From these observations it follows that $D_j\leq 2^{|S|} \times 2^{|S'|}/2:=D/2$.
Finally, $\rho_{Y_{{\alpha_j}|}}$ is the probability distribution of configuration potential $Y_{\alpha_j}$ knowing the value of $|S'|-1$ linearly independent configuration potentials affecting $S'$ as well as all magnetic fields affecting sites outside of the sublattice $S'$. We show in \S \ref{ConditionalDistr} that this distribution is normal and its variance is at least $\sigma_B$. Taken together, Eqs.~(\ref{IndividualBound1}) and (\ref{individualbound}) thus yield the following bound concerning individual Green's functions
\begin{equation}\label{IndividualBound}
\mathbb{E}_{Y_{\alpha_j}|}\Big[\|G^j_{\G_{S,S'}}(\alpha_j,\alpha_j;\,z)\|^{s_j}\Big]\leq\frac{(2D_j)^{s_j}}{1-s_j}\|\rho_{Y_{{\alpha_j}|}}\|_{\infty}^{s_j}.
\end{equation}

Using the result of Eq.~(\ref{IndividualBound}) in conjunction with Eqs.~(\ref{ODiagGbound}, \ref{Holderstep}, \ref{ExpecYj}), we finally arrive at the desired bound for the disorder-averaged fractional norm of any generalized Green's function
\begin{align}\label{ODiagGbound2}
&\mathbb{E}\Big[\big\|G_{\G_{S,S'}}(\omega,\alpha;\,z)\big\|^s\Big]\leq\hspace{-2mm}\sum_{\ell\geq d(\alpha,\omega)}^{L_p} \hspace{-2.5mm}|\Pi_{\G_{S,S'};\alpha\omega}|(\ell)(2\mc{J})^{\ell s}\prod_{j=0}^\ell\left(\frac{(2D_j)^{s_j }}{1-s_j}\|\rho_{Y_{{\alpha_j}|}}\|_{\infty}^{s_j}\right)^{1/q_j}\hspace{-4mm}.\hspace{-4mm}
\end{align}
We establish in Appendix \ref{ConditionalDistr} that $\|\rho_{Y_{{\alpha_j}|}}\|_{\infty}\leq \|\rho\|_\infty=(\sigma_B\sqrt{2\pi})^{-1}$. Furthermore $|\Pi_{\G_{S,S'};\alpha\omega}|(\ell)\leq |S'|^\ell$, $D_j\leq D/2$ and  
\begin{equation}
\prod_{j=0}^\ell\Big(\frac{1}{1-s_j}\Big)^{1/q_j}=\prod_{j=0}^\ell\Big(\frac{1}{1-(\ell+1)s}\Big)^{1/(\ell+1)}=\frac{1}{1-(\ell+1)s},
\end{equation}
upon choosing $q_j=\ell+1$, which is allowed by the sole requirement on the $q_j$, $\sum_{j=0}^{\ell} q_j^{-1}=1$. Now define $r:=\big(D\mc{J}\|\rho\|_\infty\big)^{s}|S'|$, then the above observations lead to
\begin{align}
\mathbb{E}\Big[\big\|G_{\G_{S,S'}}(\omega,\alpha;\,z)\big\|^s\Big]&\leq  |S'|^{-1}\sum_{\ell\geq d(\alpha,\omega)}^{L_p} \frac{r^{\ell+1}}{1-(\ell+1)s},\nonumber\\
&\leq  |S'|^{-1}\Big|\sum_{\ell\geq d(\alpha,\omega)}^{\infty} \frac{r^{\ell+1}}{1-(\ell+1)s}\Big|.
\label{theseries}
\end{align}
When $r<1$, the series of Eq.~(\ref{theseries}) converges, and we obtain the bound 
\begin{align}
&\mathbb{E}\Big[\big\|G_{\G_{S,S'}}(\omega,\alpha;\,z)\big\|^s\Big]\leq|S'|^{-1}\frac{r^{1/s}}{s}B_{r}\big(d(\alpha,\omega)+1-s^{-1},0\big),\label{EulerBetastep}
\end{align}
with  $B_z(a,b)=\int_0^z t^{a-1}(1-t)^{b-1}dt$ is the incomplete Euler Beta function. To make the bound of Eq.~(\ref{EulerBetastep}) easier to understand, we note that for $0<s<1$, $0<z<1$ and $a\in\mathbb{R}$,
\begin{equation}
z^{1/s} B_{z}(a,0)<z^{(a-1)+1/s} \Phi \left(z,1,a\right),
\end{equation}
with $\Phi$ the Hurwitz-Lerch zeta-function and $\Phi \left(z,1,a\right)$ decays algebraically with $a$ for $a>0$. Thus we find
\begin{subequations}\label{BoundFinalG}
\begin{equation}
\mathbb{E}\Big[\big\|G_{\G_{S,S'}}(\omega,\alpha;\,z)\big\|^s\Big]\leq\frac{1}{s|S'|}\Phi \left(e^{-\frac{1}{\zeta}},1,d(\alpha,\omega)+1-s^{-1}\right)e^{-\frac{d(\alpha,\omega)}{\zeta}}.
\end{equation}
In this expression, we have defined
\begin{align}
&\zeta^{-1}=s\log\Big(\frac{\sigma_B\sqrt{2\pi}}{D\mc{J}\,|S'|^{1/s}}\Big),\\
\shortintertext{i.e. $r=e^{-1/\zeta}$. In particular we have,}
&r>1\iff\zeta>0\iff\sigma_B>D\mc{J}\,|S'|^{1/s}/\sqrt{2\pi}.
\end{align}
\end{subequations} 
At this point, the requirement $s<1/(2^{|S'|}+1)<2^{-|S'|}$, which we used in the proof, clearly seems to heavily degrade the condition on $\sigma_B$ for the onset of localisation: indeed $s<2^{-|S'|}\Rightarrow \sigma_B\geq|S'|^{1/s}\geq |S'|^{2^{|S'|}}$ is extremely large, even for moderate $|S'|$. We circumvent this difficulty by adapting a result of Ref. \cite{Aizenman2009} to our situation:

\begin{lemma}[Based on Aizenman and Warzel \cite{Aizenman2009}]\label{ExtensionLemma} If there exists $s_{max}<1$ such that for all $s$, $0<s<s_{max}$,
\begin{equation}
\mathbb{E}\Big[\big\|G_{\G_{S,S'}}(\omega,\alpha;\,z)\big\|^s\Big]<C e^{-d(\alpha,\omega)/\zeta},
\end{equation}
then for all $s\in(0,1)$ we have
\begin{equation}
\mathbb{E}\Big[\|G_{\G_{S,S'}}(\omega,\alpha;\,z)\|^s\Big]\leq  c[s_1]^{s/s_1}\frac{2^{|S|s}}{(\sigma_B\sqrt{2\pi})^s}
C^{\frac{s_1-s}{s_1}} e^{-\frac{(s_1-s)}{ s_1}\frac{d(\alpha,\omega)}{\zeta}},
\end{equation}
with $s_{max}<s_1<1$ and $c[s_1]=4^{1-s_1}k^{s_1}/(1-s_1)$ with $k$ a finite universal constant.
\end{lemma}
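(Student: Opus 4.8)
The strategy is an interpolation argument: the hypothesis controls a \emph{small} fractional power of $G_{\G_{S,S'}}(\omega,\alpha;z)$ with exponential decay, and I would trade part of that decay for an \emph{a priori} bound on a power $s_1$ close to $1$, gluing the two together by H\"older's inequality. The new input, borrowed from Aizenman and Warzel~\cite{Aizenman2009}, is that moments of order $s_1<1$ of generalised Green's functions are bounded by a quantity that is uniform in $z$ and grows with the system only through the internal dimension $2^{|S|}$, with a prefactor $c[s_1]=4^{1-s_1}k^{s_1}/(1-s_1)$ whose only divergence is the harmless $(1-s_1)^{-1}$.

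\textbf{Step 1 (a priori bound).} I would first establish that for every $s_1\in(0,1)$,
\[
\mathbb{E}\big[\,\|G_{\G_{S,S'}}(\omega,\alpha;z)\|^{s_1}\,\big]\ \le\ c[s_1]\,\big(2^{|S|}\|\rho\|_\infty\big)^{s_1},\qquad \|\rho\|_\infty=(\sigma_B\sqrt{2\pi})^{-1}.
\]
To reach this I isolate the dependence on a single configuration potential. Conditioning on all random fields except $Y_\alpha$ and using the $Y_\alpha$-collapse of \S\ref{Statapp} together with a Schur complement for the block of configurations whose potential is affine in $Y_\alpha$, the matrix $G_{\G_{S,S'}}(\omega,\alpha;z)$ should appear as a submatrix of $[\,Y_\alpha\matr{I}+\matr{C}(z)\,]^{-1}$ with $\matr{C}(z)$ independent of $Y_\alpha$ and of rank at most $D/2=2^{|S|}2^{|S'|-1}$. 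Bounding its conditional expectation over $Y_\alpha$ directly through Lemma~\ref{ExpectationProp} would leave a factor scaling like $D^{s_1}$; to get instead the rank-independent constant $c[s_1]$ I would invoke the sharpened single-potential fractional-moment estimate of \cite{Aizenman2009}, using that the conditional density of $Y_\alpha$ is still bounded by $\|\rho\|_\infty$ (see \S\ref{ConditionalDistr}).

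\textbf{Step 2 (interpolation).} For a target exponent $s$ with $s_{max}\le s<1$ (for $s<s_{max}$ the statement is the hypothesis), fix $s_1\in(s_{max},1)$ and an auxiliary $s_0\in(0,s_{max})$, and write $s=\lambda s_0+(1-\lambda)s_1$ with $\lambda=(s_1-s)/(s_1-s_0)\in(0,1)$. H\"older's inequality with conjugate exponents $1/\lambda$ and $1/(1-\lambda)$, applied to $\|G\|^{s}=\|G\|^{\lambda s_0}\,\|G\|^{(1-\lambda)s_1}$, gives
\[
\mathbb{E}\big[\|G_{\G_{S,S'}}(\omega,\alpha;z)\|^{s}\big]\ \le\ \Big(\mathbb{E}\big[\|G_{\G_{S,S'}}(\omega,\alpha;z)\|^{s_0}\big]\Big)^{\lambda}\Big(\mathbb{E}\big[\|G_{\G_{S,S'}}(\omega,\alpha;z)\|^{s_1}\big]\Big)^{1-\lambda}.
\]
Inserting the hypothesis $\mathbb{E}[\|G\|^{s_0}]\le Ce^{-d(\alpha,\omega)/\zeta}$ in the first factor and the bound of Step 1 in the second, and then letting $s_0\to0^+$ so that $\lambda\to(s_1-s)/s_1$ and $1-\lambda\to s/s_1$, yields
\[
\mathbb{E}\big[\|G_{\G_{S,S'}}(\omega,\alpha;z)\|^{s}\big]\ \le\ c[s_1]^{s/s_1}\,\frac{2^{|S|s}}{(\sigma_B\sqrt{2\pi})^s}\,C^{(s_1-s)/s_1}\,e^{-\frac{(s_1-s)}{s_1}\frac{d(\alpha,\omega)}{\zeta}},
\]
which is the claim.

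\textbf{Main obstacle.} The crux is Step 1: securing the a priori moment bound near $s_1=1$ with a constant that does not deteriorate with $|\mathbb{S}|$. The naive route through Lemma~\ref{ExpectationProp} is cheap but loses a factor $D^{s_1}$, which would be cubed through the remainder of the proof, so one genuinely needs the refined estimate of \cite{Aizenman2009}; the delicate point is to verify that the Schur-complement reduction really presents $G_{\G_{S,S'}}(\omega,\alpha;z)$ as a submatrix of an affine-fractional function of the \emph{single} variable $Y_\alpha$ to which that estimate applies, in particular that the possible non-normality of $\matr{C}(z)$ does not obstruct it. A secondary subtlety is the $s_0\to0^+$ limit in Step 2: since $\zeta=\zeta(s_0)$ from the hypothesis degrades as $s_0$ decreases, one must check that $s_0$ can be taken small enough to reach the stated exponents while keeping $\zeta>0$ — which is precisely where the threshold $\sigma_B^{\text{min}}$ enters.
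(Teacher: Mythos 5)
Your proposal is correct and follows essentially the same route as the paper: the same Hölder interpolation between a small exponent $s_0$ (the paper's $s_2$) and $s_1$ close to $1$, followed by the limit $s_0\to 0^+$, with the a priori moment bound $\mathbb{E}\big[\|G_{\G_{S,S'}}(\omega,\alpha;z)\|^{s_1}\big]\le c[s_1]\,2^{|S|s_1}\|\rho\|_\infty^{s_1}$ imported from Theorem 2.1 of Aizenman and Warzel exactly as the paper does (the paper simply cites it rather than sketching its derivation as your Step 1 does). Your closing remark about $\zeta$ depending on $s_0$ concerns the lemma's application, not its proof — in the statement the hypothesis fixes a single $C$ and $\zeta$ for all $s<s_{max}$, so the limit is harmless.
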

We call this result \textit{the extension lemma}. For the sake of clarity we defer its proof to  Appendix \ref{ProofExtendingLemma}. Using the extension lemma, we may extend the exponential bound from $0<s<1/(2^{|S'|}+1)$ to $0<s<1$. We obtain, for all $0<s<1$,
\begin{align}
&\mathbb{E}\Big[\big\|G_{\G_{S,S'}}(\omega,\alpha;\,z)\big\|^s\Big]\leq c[s_1]^{\frac{s}{s_1}}\frac{2^{-|S'|s}D^s}{(\sigma_B\sqrt{2\pi})^s}\times\label{Efinal}\\
&\hspace{25mm}\left(\frac{1}{s |S'|}\Phi\big(e^{-1/\zeta},1,d(\alpha,\omega)+1-s^{-1}\big)\right)^{\frac{s_1-s}{s_1}}e^{-\frac{(s_1-s)}{s_1}\,\frac{d(\alpha,\omega)}{\zeta}},\nonumber
\end{align}
where we used $2^{|S|}=D\times 2^{-|S'|}$ and the Hurwitz-Lerch zeta-function decays algebraically with $d(\alpha,\omega)$ when $d(\alpha,\omega)>s^{-1}-1$.
The results of Eqs.~(\ref{BoundFinalG}, \ref{Efinal}) together with Lemma \ref{FtoG} establish Theorem \ref{MBtheorem}.\qed

\section{Technical results}\label{TechnicRes}
In this appendix we regroup four technical results: 1) we prove Lemma \ref{FtoG} relating exponential bounds on fractional norms of generalised Green's functions to dynamical localisation; 2) we prove Lemma \ref{ExpectationProp}, which provides an upper bound on expectations of the form $\mathbb{E}_Y\big[\|Y\matr{I}+\matr{A}\|^s\big]$; 3) we prove the extension Lemma \ref{ExtensionLemma}; and 4) we study the marginal and joint distributions of the configuration potentials.

\subsection{Generalised fractional moment criterion}\label{BoundFracMom}~\\

\noindent In this section we establish Lemma \ref{FtoG}, which we restate here for convenience.~\\
%

\noindent \textbf{Lemma \ref{FtoG}.}
\textit{Let $\mathbb{S}=S\cup S'$ be a system partition, $\G_{S,S'}$ the graph associated to it and $d(\alpha,\,\omega)$ the distance between vertices $v_{\alpha}$ and $v_{\omega}$ on $\G_{S,S'}$. If there exists $C<\infty$ and $\zeta<\infty$ such that 
\begin{equation}
\mathbb{E}\Big[\big\|G_{\G_{S,S'}}\big(\omega,\alpha;\,z\big)\big\|^s\Big]<Ce^{-d(\alpha,\,\omega)/\zeta},\label{LocalizG2}
\end{equation}
then for $0<s<1$,
\begin{align}
&\mathbb{E}\bigg[\sup_{t\in\mathbb{R}}\Big\|\Big(\matr{P}_{I}(\matr{H}) e^{-i \matr{H} t}\Big)_{\omega\ot\alpha}\Big\|\bigg]\leq \frac{1}{2\pi}\big(2|I|\big)^{\frac{1}{2-s}}\,Ce^{-\frac{d(\alpha,\omega)}{(2-s)\zeta}},
\end{align}
where $|I|$ is the Lebesgue measure of $I$.}


\begin{proof}
To prove the lemma, we first use Cauchy's integral relation for the matrix exponential \cite{Higham2008} to relate the exponential of the Hamiltonian with its resolvent
\begin{subequations}
\begin{align}
\Big(\matr{P}_{I}(\matr{H}) e^{-i \matr{H} t}\Big)_{\omega\ot\alpha}&=\big(\langle\omega|\otimes \matr{I}_S\big)\matr{P}_{I}(\matr{H}) e^{-i \matr{H} t}\big(|\alpha\rangle\otimes \matr{I}_S\big),\\
&=\big(\langle\omega|\otimes \matr{I}_S\big)\,\frac{1}{2\pi i}\oint_{\gamma_I}\hspace{-1.5mm} e^{-i z t}\, \matr{R}_{\matr{H}}(z) dz\,\big(|\alpha\rangle\otimes \matr{I}_S\big),
\end{align}
\end{subequations}
with $\gamma_I$ a contour enclosing all the eigenvalues $\lambda\in \text{Sp}(\matr{H})\cap I$, with $\text{Sp}(\matr{H})$ the spectrum of $\matr{H}$, and only these. Using a line integral along the contour $\gamma_I$ yields the upper bound
\begin{align}\label{Bound1}
&\sup_{t\in\mathbb{R}}\Big\|\Big(\matr{P}_{I}(\matr{H}) e^{-i \matr{H} t}\Big)_{\omega\ot\alpha}\Big\|\leq \frac{1}{2\pi}\int_{I} \big\|G_{\G_{S,S'}}\big(\omega,\alpha;\gamma_I(x)\big)\big\| \,d\gamma_I,
\end{align}
where $d\gamma_I=\gamma'_I(x) dx$ and we used $\|e^{-i zt}\|_\infty=1$ for $z\in\mathbb{R}$ \footnote{We need only consider $z\in\mathbb{R}$ as all the poles of the resolvent are on the real line so that the contour can be brought arbitrarily close to it.}. More generally, for uniformly bounded functions $f(z)$ we would have $\|f\|_\infty$. In order for the fractional moments to appear, we use H\"{o}lder inequality with $1=s\,a+2(1-a)$, $a=1/(2-s)$ and $s\in(0,1)$ \cite{Hundertmark2000}. This gives
\begin{align}\label{Bound2}
\int_{I} \big\|G_{\G_{S,S'}}\big(\omega,\alpha;\gamma_I(x)\big)\big\| \,d\gamma_I&\leq\left(\int_{I} \big\|G_{\G_{S,S'}}\big(\omega,\alpha;\gamma_I(x)\big)\big\|^2 \,d\gamma_I\right)^{1-a}\times\nonumber\\
&\hspace{8mm}\left(\int_{I} \big\|G_{\G_{S,S'}}\big(\omega,\alpha;\gamma_I(x)\big)\big\|^s \,d\gamma_I\right)^a.
\end{align}
Now we show that the first term of the right hand side is upper bounded by 1. To this end, we note that  $\text{Sp}(\matr{H})$ is discrete and the eigenvalues have almost surely multiplicity one \cite{Hundertmark2000,DelRio1996}. Furthermore, $\matr{H}$ is Hermitian and thus if the domain is finite, the spectral theorem indicates that for all $\lambda\in \text{Sp}(\matr{H})$,  
\begin{equation}
\lim_{\varepsilon\downarrow0} -i\varepsilon\, \matr{R}_{\matr{H}}(\lambda+i\varepsilon)=\matr{P}_\lambda,
\end{equation} 
with $\matr{P}_\lambda=|\varphi_\lambda\rangle\langle \varphi_\lambda|$ the projector onto the eigenstate corresponding to eigenvalue $\lambda$. Therefore, 
\begin{align}
\int_{I} \big\|G_{\G_{S,S'}}\big(\omega,\alpha;\gamma_I(x)\big)\big\|^2 \,d\gamma_I&=\sum_{\lambda\in\text{Sp}(\matr{H})\cap I}\hspace{-1mm}\|\langle \omega| \matr{P}_\lambda\,|\alpha\rangle\|^2\nonumber\\
&=\sum_{\lambda\in\text{Sp}(\matr{H})\cap I}\hspace{-1mm}\|\langle \omega| \matr{P}_\lambda\,|\alpha\rangle\langle \alpha| \matr{P}_\lambda\,|\omega\rangle\|,\nonumber\\
&=\sum_{\lambda\in\text{Sp}(\matr{H})\cap I}\|\langle \omega|\varphi_\lambda\rangle \langle\varphi_\lambda|\alpha\rangle\langle \alpha|\varphi_\lambda\rangle \langle\varphi_\lambda|\omega\rangle\|,
\end{align}
where we used $\|\matr{M}\|^2=\|\matr{M}\matr{M}^\dagger\|$ and $|\alpha\rangle$ and $|\omega\rangle$ are short hand notations for $|\alpha\rangle\otimes \matr{I}_S$ and $|\omega\rangle \otimes \matr{I}_S$, respectively. Now we remark that $\langle\varphi_\lambda|\alpha\rangle\langle \alpha|\varphi_\lambda\rangle$ is a positive real number
which we denote $|\langle\varphi_\lambda|\alpha\rangle|^2$, and similarly $\langle\omega|\varphi_\lambda\rangle\langle\varphi_\lambda|\omega\rangle\equiv|\langle\varphi_\lambda|\omega\rangle|^2\in\mathbb{R}^{+}$. Thus
\begin{align}
\hspace{-3mm}\sum_{\lambda\in\text{Sp}(\matr{H})\cap I}\hspace{-2mm}\|\langle \omega|\varphi_\lambda\rangle \langle\varphi_\lambda|\alpha\rangle\langle \alpha|\varphi_\lambda\rangle \langle\varphi_\lambda|\omega\rangle\|&\leq\sum_{\lambda\in\text{Sp}(\matr{H})\cap I}\hspace{-2mm}|\langle\varphi_\lambda|\omega\rangle|^2\sum_{\lambda'\in\text{Sp}(\matr{H})\cap I}\hspace{-2mm}|\langle\varphi_{\lambda'}|\alpha\rangle|^2,\nonumber\\
&\leq1.
\end{align}
Inserting this result in Eq.~(\ref{Bound2}) and then in Eq.~(\ref{Bound1}) yields the bound
\begin{align}
&\sup_{t\in\mathbb{R}}\Big\|\Big(\matr{P}_{I}(\matr{H}) e^{-i \matr{H} t}\Big)_{\omega\ot\alpha}\Big\|\leq \frac{1}{2\pi}\left(\int_{I} \big\|G_{\G_{S,S'}}\big(\omega,\alpha;\gamma_I(x)\big)\big\|^s\,d\gamma_I\right)^{\frac{1}{2-s}}.
\end{align}
The remaining steps are entirely similar to those exposed in \cite{Hundertmark2000} and  \cite{DelRio1996}. Taking the expectation values on both sides, we note that $1/(2-s)<1$ and H\"{o}lder inequality yields $\mathbb{E}\big[|f|^{1/(2-s)}\big]\leq\big(\mathbb{E}\big[|f|\big]\big)^{1/(2-s)}$. Now Fubini's theorem gives 
\begin{align}
&\mathbb{E}\bigg[\sup_{t\in\mathbb{R}}\Big\|\Big(\matr{P}_{I}(\matr{H}) e^{-i \matr{H} t}\Big)_{\omega\ot\alpha}\Big\|\bigg]\leq\frac{1}{2\pi}\left(\int_{I} \mathbb{E}\Big[\big\|G_{\G_{S,S'}}\big(\omega,\alpha;\gamma_I(x)\big)\big\|^s\Big]\,d\gamma_I\right)^{\frac{1}{2-s}}.
\end{align}
If the fractional moments of the Green's function are bounded by an exponential, e.g. $Ce^{-d(\alpha,\omega)/\zeta}$, the M-L inequality finally implies
\begin{equation}
\mathbb{E}\bigg[\sup_{t\in\mathbb{R}}\Big\|\Big(\matr{P}_{I}(\matr{H}) e^{-i \matr{H} t}\Big)_{\omega\ot\alpha}\Big\|\bigg]\leq C_I\,Ce^{-\frac{d(\alpha,\omega)}{(2-s)\zeta}},
\end{equation}
where $C_I=(2\pi)^{-1}\big(2|I|\big)^{\frac{1}{2-s}}$ since $2|I|$ is the length of the smallest contour enclosing all poles within $I$, where $|I|$ the Lebesgue measure of $I$. This result extends the fractional moment criterion of Aizenman and Molchanov  to arbitrary partitions of the system.\qed
\end{proof}

\subsection{Bounding the expectation of a fractional norm}\label{ProofExpectNorm}~\\

\noindent In this section we establish Lemma \ref{ExpectationProp}, which we restate here for convenience.~\\

\noindent \textbf{Lemma 2.} \textit{Let $Y$ be a normally distributed random variable with distribution function $\rho$.
Let $\matr{A}\in\mathbb{C}^{n\times n}$ be a normal matrix. Then the following bound holds for any $0<s<1$,
\begin{equation}
\mathbb{E}_Y\Big[\big\|[Y\matr{I}+\matr{A}]^{-1}\big\|^{s}\Big]\leq\frac{(2n)^s}{1-s}\|\rho\|_{\infty}^s,
\end{equation}
where $\|.\|$ is the 2-norm.}
\begin{proof}
The lemma follows from the layer-cake integral for the expectation \cite{Aizenman2009}. 
Since $\matr{A}$ is normal and finite, there exist a unitary matrix $\matr{U}$ and diagonal matrix $\matr{D}$ such that $Y\matr{I}+\matr{A}=\matr{U}(Y\matr{I}+\matr{D})\matr{U}^{\dagger}$. 
Now let $\text{Sp}(\matr{A})$ be the spectrum of $\matr{A}$ and suppose $Y$ is fixed. Then 
\begin{align}
\big\|[Y\matr{I}+\matr{A}]^{-1}\big\|&=\max_{\lambda\in \text{Sp}(\matr{A})}|Y+\lambda|^{-1},\nonumber\\
&=\Big(\min_{\lambda\in \text{Sp}(\matr{A})}|Y+\lambda|\Big)^{-1}.\label{Min}
\end{align}
Let $\boldsymbol{1}[x]$ be the indicator function, which equals 1 if assertion $x$ is true and $0$ otherwise. 
Then the expectation is upper bounded as follows
\begin{align}
\mathbb{E}_Y\Big[\big\|[Y\matr{I}+\matr{A}]^{-1}\big\|^{s}\Big]&=\int_{-\infty}^{\infty}\big\|[Y\matr{I}+\matr{A}]^{-1}\big\|^{s}d\varrho,\nonumber\\
&=\int_{-\infty}^{\infty}\int_{0}^{\|[Y\matr{I}+\matr{A}]^{-1}\|^{s}}1dt\, d\varrho,\nonumber\\
&=\int_{-\infty}^{\infty}\int_{0}^{\infty}\boldsymbol{1}\Big[\big\|[Y\matr{I}+\matr{A}]^{-1}\big\|^{s}>t\Big] dt\,d\varrho,
\end{align}
where $\varrho$ is the cumulative distribution function.
By Fubini's theorem and on using Eq.~(\ref{Min}) this is
\begin{align}\label{StepDemoHolder}
\hspace{-3mm}\mathbb{E}_Y\Big[\big\|[Y\matr{I}+\matr{A}]^{-1}\big\|^{s}\Big]&=\int_{0}^{\infty}\int_{-\infty}^{\infty}\boldsymbol{1}\Big[\min_{\lambda\in \text{Sp}(\matr{A})}|Y+\lambda|<t^{-1/s}\Big] d\varrho\,dt,\nonumber\\
&=\int_0^{\infty}\min\Big(1,\int_{-\infty}^{\infty}\boldsymbol{1}\Big[\min_{\lambda\in \text{Sp}(\matr{A})}|Y+\lambda|<t^{-1/s}\Big] d\varrho\Big)\,dt.
\end{align}
Now let $\mathcal{I}_i=[a_i,a_{i+1}]$, $0\leq i\leq n-1$ be $n$ intervals such that $\min_{\lambda\in \text{Sp}(\matr{A})}|Y+\lambda|=|Y+\lambda_i|$ for all $Y\in \mathcal{I}_i$. Then
\begin{align}
\int_{-\infty}^{\infty}\boldsymbol{1}\Big[\min_{\lambda\in \text{Sp}(\matr{A})}|Y+\lambda|<t^{-1/s}\Big] d\varrho&=\sum_{i=0}^n\int_{a_i}^{a_{i+1}}\boldsymbol{1}\big[|Y+\lambda_i|<t^{-1/s}\big] d\varrho,\nonumber\\
&\leq\sum_{i=0}^n\int_{-\infty}^{\infty}\boldsymbol{1}\big[|Y+\lambda_i|<t^{-1/s}\big] d\varrho,\nonumber\\
&=\sum_{i=0}^n\varrho\big(|Y+\lambda_i|<t^{-1/s}\big),\nonumber\\
&\leq 2n \|\rho\|_\infty t^{-1/s},\label{laststep}
\end{align}
where, by virtue of the fact that $\varrho$ is H\"{o}lder continuous\footnote{We consider only on the case where $\rho$ is H\"{o}lder continuous of order 1, which is the case of the normal distribution. The proof extends well to H\"{o}lder continuous distributions of higher order \cite{Hundertmark2000}.}, for any interval $[a,\,b]$, $\varrho([a,\,b])\leq |a-b|\,K$ with $K=\sup_{a,\,b}\varrho([a,\,b])/|a-b|\leq2\|\rho\|_\infty$. The last step of Eq.~(\ref{laststep}) is a rather crude upper bound which may be refined if more is known about the spectrum of $\matr{M}$. 
Inserting this result in Eq.~(\ref{StepDemoHolder}) gives
\begin{align}
\mathbb{E}_Y\Big[\big\|[Y\matr{I}+\matr{A}]^{-1}\big\|^{s}\Big]&\leq \int_0^{\infty}\min\Big(1,2n\|\rho\|_\infty t^{-1/s}\Big)\,dt,\nonumber\\
&= \int_0^{m}1dt+2n\|\rho\|_\infty\int_{m}^{\infty} t^{-1/s}dt,\nonumber\\
&= \frac{n^s}{1-s}2^s\|\rho\|^s_\infty.
\end{align}
We obtain the last line on choosing the $m$ that minimises the bound.\qed
\end{proof}

\subsection{Proof of the extension lemma}\label{ProofExtendingLemma}~\\

\noindent In this section we prove the extension Lemma \ref{ExtensionLemma}, which we restate here for convenience.\\

\noindent \textbf{Lemma \ref{ExtensionLemma}. (Extension lemma, based on Aizenman and Warzel \cite{Aizenman2009})}\textit{ If there exists $s_{max}<1$ such that for all $s$, $0<s<s_{max}$,
\begin{equation}
\mathbb{E}\Big[\big\|G_{\G_{S,S'}}(\omega,\alpha;\,z)\big\|^s\Big]<C e^{-d(\alpha,\omega)/\zeta},
\end{equation}
then for all $\tau\in(0,1)$ we have
\begin{equation}
\mathbb{E}\Big[\|G_{\G_{S,S'}}(\omega,\alpha;\,z)\|^\tau\Big]\leq  c[s_1]^{\tau/s_1}2^{|S|\tau}\|\rho\|_{\infty}^{\tau}C^{\frac{s_1-\tau}{s_1}} e^{-\frac{(s_1-\tau)d(\alpha,\omega)}{ s_1\zeta}},
\end{equation}
with $s_{max}<s_1<1$.
}
\begin{proof}
Let $0<s_2<s_{\text{max}}<\tau<s_1<1$. Then since $\tau=s_1(\tau-s_2)/(s_1-s_2)+s_2(s_1-\tau)/(s_1-s_2)$, it follows that
\begin{align}
\mathbb{E}\Big[\|G_{\G_{S,S'}}(\omega,\alpha;\,z)\|^\tau\Big]&\\
&\hspace{-10mm}=\mathbb{E}\Big[\|G_{\G_{S,S'}}(\omega,\alpha;\,z)\|^{s_1\frac{\tau-s_2}{s_1-s_2}}\|G_{\G_{S,S'}}(\omega,\alpha;\,z)\|^{s_2\frac{s_1-\tau}{s_1-s_2}}\Big].\nonumber
\end{align}
We use H\"{o}lder inequality to separate the above expectation into a product involving two expectations. Let $q_1=(s_1-s_2)/(\tau-s_2)$ and $q_2=(s_1-s_2)/(s_1-\tau)$ and remark that $q_1^{-1}+q_2^{-1}=1$. Thus
\begin{align}
\mathbb{E}\Big[\|G_{\G_{S,S'}}(\omega,\alpha;\,z)\|^\tau\Big]&\leq\mathbb{E}\Big[\|G_{\G_{S,S'}}(\omega,\alpha;\,z)\|^{q_1 s_1\frac{\tau-s_2}{s_1-s_2}}\Big]^{1/q_1}\times\nonumber\\
&\hspace{25mm}\mathbb{E}\Big[\|G_{\G_{S,S'}}(\omega,\alpha;\,z)\|^{q_2 s_2\frac{s_1-\tau}{s_1-s_2}}\Big]^{1/q_2},\nonumber\\
&=\mathbb{E}\Big[\|G_{\G_{S,S'}}(\omega,\alpha;\,z)\|^{s_1}\Big]^{\frac{\tau-s_2}{s_1-s_2}}\times\nonumber\\
&\hspace{25mm}\mathbb{E}\Big[\|G_{\G_{S,S'}}(\omega,\alpha;\,z)\|^{s_2}\Big]^{\frac{s_1-\tau}{s_1-s_2}}.
\end{align}
Now $\mathbb{E}\big[\|G_{\G_{S,S'}}(\omega,\alpha;\,z)\|^{s_2}\big]\leq C e^{-d(\alpha,\omega)/\zeta}$ is exponentially bounded since $s_2<s_{\text{max}}$. Furthermore, $\mathbb{E}\big[\|G_{\G_{S,S'}}(\omega,\alpha;\,z)\|^{s_1}\big]$ is bounded as well. This follows from Theorem 2.1 of Ref. \cite{Aizenman2009}, based on results of Ref. \cite{Aizenman2006}. These results give
\begin{equation}
\|G_{\G_{S,S'}}(\omega,\alpha;\,z)\|^{s_1}\leq c[s_1]\big(2^{|S|}\big)^{s_1}\|\rho\|_{\infty}^{s_1}
\end{equation}
with $c[s_1]=4^{1-s_1}k^{s_1}/(1-s_1)$ and $k$ a finite universal constant.
%
%
%
%
%
Then,
\begin{align}
\mathbb{E}\Big[\|G_{\G_{S,S'}}(\omega,\alpha;\,z)\|^\tau\Big]\leq \Big(c[s_1]2^{|S|s_1}\|\rho\|_{\infty}^{s_1}\Big)^{\frac{\tau-s_2}{s_1-s_2}}C^{\frac{s_1-\tau}{s_1-s_2}} e^{-\frac{(s_1-\tau)d(\alpha,\omega)}{\zeta(s_1-s_2)}}.
\end{align}
Since we may choose $s_2$ to be arbitrarily small, we obtain for all $0<\tau<1$,
\begin{align}
\mathbb{E}\Big[\|G_{\G_{S,S'}}(\omega,\alpha;\,z)\|^\tau\Big]\leq c[s_1]^{\tau/s_1}2^{|S|\tau}\|\rho\|_{\infty}^{\tau}C^{\frac{s_1-\tau}{s_1}} e^{-\frac{(s_1-\tau)d(\alpha,\omega)}{ s_1\zeta}}.
\end{align}
This establishes the lemma. \qed
\end{proof}

\subsection{Conditional distribution of the configuration potentials}\label{ConditionalDistr}~\\

\noindent In this section we study the marginal and joint distribution of the configuration potentials. 

We begin our study by determining the covariance between any two configuration potentials. To this end, we first introduce some notation.
Let $\mathbf{B}=(B_1,\cdots,B_{|S'|})$ be the vector of the random $B$-fields at the sites comprised in $S'$. The $\{B_i\}_{i\in S'}$ are iid normal random variables with variance $\textrm{Var}(B_i)=\sigma_B^2$. Let $\mathbf{v_\alpha}=(v^{(\alpha)}_1,\,v^{(\alpha)}_2,\cdots, v^{(\alpha)}_{|S'|})$ be the vector of coefficients  $a_i^{(\alpha)}=\pm1$ such that
\begin{equation}
Y_\alpha=\mathbf{v_\alpha}\,.\,\mathbf{B}=\sum_{i=1}^{|S'|}v^{(\alpha)}_iB_i,
\end{equation} 
i.e. $v_i^{(\alpha)}=1$ when the spin at site $i$ is up along $z$ in the configuration $\alpha$ of $S'$ and $v_i^{(\alpha)}=-1$ otherwise. Because they are sums of the same random $B$-fields, the configuration potentials are strongly dependent. The following proposition quantifies the covariance between the configuration potentials.

\begin{proposition}\label{CovarianceYY}
Let $Y_\alpha=\mathbf{v_\alpha}\,.\,\mathbf{B}$, $Y_{\beta}=\mathbf{v_{\beta}}\,.\,\mathbf{B}$ and $D_{\alpha,\beta}$ be the Hamming distance between $\mathbf{v_\alpha}$ and $\mathbf{v_{\beta}}$.
Then the covariance $\mathrm{Cov}(Y_\alpha, Y_{\beta})$ is
\begin{equation}\label{CovYY}
\mathrm{Cov}(Y_\alpha, Y_{\beta})=(|S'|-2D_{\alpha,\beta})\sigma_B^2. 
\end{equation}
\end{proposition}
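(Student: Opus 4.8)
The plan is a direct computation using bilinearity of the covariance together with the independence of the random fields $\{B_i\}_{i\in S'}$. First I would expand both configuration potentials in the basis of the $B$-fields, $Y_\alpha=\sum_{i=1}^{|S'|}v_i^{(\alpha)}B_i$ and $Y_{\beta}=\sum_{j=1}^{|S'|}v_j^{(\beta)}B_j$, and use bilinearity of the covariance to obtain
\begin{equation}
\mathrm{Cov}(Y_\alpha,Y_{\beta})=\sum_{i,j=1}^{|S'|}v_i^{(\alpha)}v_j^{(\beta)}\,\mathrm{Cov}(B_i,B_j).
\end{equation}
Since the $\{B_i\}_{i\in S'}$ are iid with variance $\sigma_B^2$, we have $\mathrm{Cov}(B_i,B_j)=\sigma_B^2\,\delta_{ij}$, so the double sum collapses to $\sigma_B^2\sum_{i=1}^{|S'|}v_i^{(\alpha)}v_i^{(\beta)}$.

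The remaining step is the combinatorial identification of this sum. Each coefficient $v_i^{(\cdot)}$ equals $\pm1$, so the product $v_i^{(\alpha)}v_i^{(\beta)}$ is $+1$ precisely when $\mathbf{v_\alpha}$ and $\mathbf{v_{\beta}}$ agree in the $i$-th coordinate and $-1$ when they disagree; that is, $v_i^{(\alpha)}v_i^{(\beta)}=1-2\,\boldsymbol{1}\big[v_i^{(\alpha)}\neq v_i^{(\beta)}\big]$. Summing over $i$ and recalling that the number of coordinates in which $\mathbf{v_\alpha}$ and $\mathbf{v_{\beta}}$ differ is by definition the Hamming distance $D_{\alpha,\beta}$, one gets $\sum_{i=1}^{|S'|}v_i^{(\alpha)}v_i^{(\beta)}=|S'|-2D_{\alpha,\beta}$. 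Combining with the previous display yields $\mathrm{Cov}(Y_\alpha,Y_{\beta})=(|S'|-2D_{\alpha,\beta})\sigma_B^2$, which is the claim.

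There is no genuine obstacle here: the statement is an elementary consequence of independence and the $\pm1$ structure of the coefficient vectors. The only points requiring a moment's care are the bookkeeping that translates the number of sign disagreements into the Hamming distance, and the (harmless) observation that the common mean of the $B_i$ plays no role, since the covariance is translation-invariant in each argument.
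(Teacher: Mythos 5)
Your proof is correct and follows essentially the same route as the paper: a direct expansion of the covariance using independence of the $B_i$ and the $\pm1$ structure of the coefficient vectors, with the sign-disagreement count identified as the Hamming distance. Your use of bilinearity of the covariance is in fact slightly cleaner than the paper's computation, which expands $\mathbb{E}[Y_\alpha Y_\beta]-\mathbb{E}[Y_\alpha]\mathbb{E}[Y_\beta]$ explicitly and carries the mean $\mu$ through several lines before it cancels.
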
 

\begin{proof}
We prove the proposition by expanding the configuration potentials in the definition of the covariance, 
\begin{align}
&\textrm{Cov}(Y_\alpha, Y_{\beta})=\mathbb{E}[Y_\alpha Y_{\beta}]-\mathbb{E}[Y_\alpha]\mathbb{E}[Y_{\beta}],\\
&\hspace{5mm}=\mathbb{E}\bigg[\hspace{-.5mm}\Big(\hspace{-4mm}\sum_{\substack{i\\v_i^{(\alpha)}=v^{(\beta)}_i}}\hspace{-3mm}v_i^{(\alpha)}B_i+\hspace{-3.5mm}\sum_{\substack{i\\v_i^{(\alpha)} \neq v_i^{(\beta)}}}\hspace{-3mm}v^{(\alpha)}_iB_i\Big)\Big(\hspace{-4mm}\sum_{\substack{{i}\\v_i^{(\alpha)}=v_{i}^{(\beta)}}}\hspace{-3mm}v_{i}^{(\beta)}B_{i}+\hspace{-3.5mm}\sum_{\substack{{i}\\v_i^{(\alpha)} \neq v_{i}^{(\beta)}}}\hspace{-3mm}v^{(\beta)}_{i}B_{i}\Big)\hspace{-.5mm}\bigg]\hspace{-.5mm}-\hspace{-.3mm}\mathbb{E}[Y_\alpha]\mathbb{E}[Y_{\beta}].\nonumber
\end{align}
Since the $v_i^{(\alpha,\,\beta)}$ coefficients are $\pm1$, the expectation of the product  is of the form $\mathbb{E}[(a+b)(a-b)]$ and simplifies to
\begin{align}
\textrm{Cov}(Y_\alpha, Y_{\beta})&=\mathbb{E}\bigg[\Big(\hspace{-1.5mm}\sum_{\substack{i\\v^{(\alpha)}_i=v^{(\beta)}_{i}}}\hspace{-1mm}v^{(\alpha)}_iB_i\Big)^2\bigg]-\mathbb{E}\bigg[\Big(\hspace{-1.5mm}\sum_{\substack{i\\v^{(\alpha)}_i \neq v_{i}^{(\beta)}}}\hspace{-1mm}v^{(\alpha)}_iB_i\Big)^2\bigg]-\mathbb{E}[Y_\alpha]\mathbb{E}[Y_{\beta}],\nonumber\\
&=(|S'|-j_\alpha-j_{\beta}+2l_{\alpha,\beta})\sigma_B^2+(|S'|-j_\alpha-j_{\beta})^2\mu^2\nonumber\\
&\hspace{4mm}-(j_\alpha+j_{\beta}-2l_{\alpha,\beta})\sigma_B^2-(j_\alpha-j_{\beta})^2\mu^2-(|S'|-2j_\alpha)\mu(|S'|-2j_{\beta})\mu,\nonumber\\
&=(|S'|-2D_{\alpha,\beta})\sigma_B^2,
\end{align}
where $j_\alpha=\#\{v^{(\alpha)}_i:\,v^{(\alpha)}_i<0\}_{1\leq i\leq |S'|}$ is the number of negative entries in $\mathbf{v_{\alpha}}$ and $l_{\alpha,\beta}=\#\{v^{(\alpha)}_i:\,v^{(\alpha)}_i<0,~v^{(\alpha)}_i=v^{\beta}_i\}_{i\in S'}$ is the number of negative entries $v_{i}^{(\alpha)}$ such that $v_{i}^{(\alpha)}=v_{i}^{(\beta)}$. 
Finally, we observe that $\textrm{Cov}(Y_\alpha, Y_{\beta})\neq0$ and $Y_\alpha$ and $Y_{\beta}$ are dependent, unless $|S'|$ is even and $D_{\alpha,\beta}=|S'|/2$. In this special case and if both $Y_\alpha$ and $Y_\beta$ are normally distributed, then they are independent. \qed
\end{proof}

\begin{lemma}[Conditional distributions of the configuration potentials]\label{StatisticsLemma} Let $\mathbb{S}=S\cup S'$ be a partition of the system and $\mc{Y}=\big\{Y_{\alpha_j}=\sum_{i\in S'}B_i(1-2\delta_{\down_i,\,\alpha_j})\big\}_{1\leq j\leq 2^{|S'|}}$ the ensemble of random configuration potentials affecting $S'$. For a configuration potential $Y_\alpha\in\mc{Y}$, let $E_\alpha\subset \mc{Y}$ be an ensemble of $|S'|-1$ configuration potentials, excluding $Y_{\alpha}$.
Then the configuration potentials of $\{Y_\alpha\}\cup E_\alpha$ are marginally and jointly normal. Furthermore, the variance of the conditional distribution of $Y_{\alpha}$ knowing all configuration potentials of $E_\alpha$ is at least $\sigma_B$.
\end{lemma}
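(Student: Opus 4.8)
The plan is to reduce everything to elementary properties of jointly Gaussian random vectors. The configuration potentials are all linear functionals of the vector $\mathbf B=(B_1,\dots,B_{|S'|})$ of independent centred Gaussians, namely $Y_{\alpha}=\mathbf v_{\alpha}\cdot\mathbf B$ with $\mathbf v_{\alpha}\in\{\pm1\}^{|S'|}$. Since any linear image of a Gaussian vector is Gaussian, the collection $\{Y_\alpha\}\cup E_\alpha$ is automatically jointly (hence marginally) normal; this disposes of the first assertion immediately. The content is in the variance statement, so I would spend the bulk of the proof there.

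First I would set up the conditioning. Write $E_\alpha=\{Y_{\beta_1},\dots,Y_{\beta_{|S'|-1}}\}$ and collect the coefficient vectors into a matrix $\matr V$ whose rows are $\mathbf v_{\beta_1},\dots,\mathbf v_{\beta_{|S'|-1}}$. The key observation is that the hypotheses — $S'$ non-adjacent so that the configuration graph is (a subgraph of) the $|S'|$-hypercube, and the fact that $E_\alpha$ consists of the potentials encountered \emph{along a simple path} in the path-sums expansion (cf.\ Eq.~(\ref{ODiagGbound})) — force the $\mathbf v_{\beta_k}$ together with $\mathbf v_\alpha$ to be linearly independent in $\mathbb R^{|S'|}$: successive vertices on a simple path of the hypercube differ in a single coordinate, so the increments are distinct standard basis vectors and the $|S'|$ vectors $\mathbf v_\alpha,\mathbf v_{\beta_1},\dots$ span $\mathbb R^{|S'|}$. (If the chosen $E_\alpha$ happens to be deficient one simply gets a larger conditional variance, which only helps; I would remark on this rather than belabour it.)

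Next comes the variance computation. For a jointly Gaussian vector the conditional variance of $Y_\alpha$ given $E_\alpha$ is the Schur-complement expression
\[
\operatorname{Var}(Y_\alpha\mid E_\alpha)=\operatorname{Var}(Y_\alpha)-\Sigma_{\alpha E}\,\Sigma_{EE}^{-1}\,\Sigma_{E\alpha},
\]
with $\Sigma_{EE}$ the covariance matrix of $E_\alpha$ and $\Sigma_{\alpha E}$ the cross-covariances, all of which are given explicitly by Proposition~\ref{CovarianceYY}: $\Sigma_{EE}=\sigma_B^2(|S'|\,\matr I-2\matr H_{E})$ and $\Sigma_{\alpha E}=\sigma_B^2(|S'|\,\mathbf 1-2\mathbf h)^{T}$ where $\matr H_E$ and $\mathbf h$ record the relevant Hamming distances. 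Equivalently, and more cleanly, since $Y_\alpha=\mathbf v_\alpha\cdot\mathbf B$ and the $B_i$ are iid with variance $\sigma_B^2$, the conditional variance equals $\sigma_B^2\,\|\mathbf v_\alpha-P\mathbf v_\alpha\|_2^2$, where $P$ is the orthogonal projection of $\mathbb R^{|S'|}$ onto $\operatorname{span}\{\mathbf v_{\beta_1},\dots,\mathbf v_{\beta_{|S'|-1}}\}$ — this is just the standard regression picture. Because $\mathbf v_\alpha\notin\operatorname{span}(E_\alpha)$ by the independence established above, the residual $\mathbf v_\alpha-P\mathbf v_\alpha$ is nonzero; and since its coordinates are obtained from the integer vector $\mathbf v_\alpha\in\{\pm1\}^{|S'|}$ by subtracting a projection onto a rational subspace, I would show that $\|\mathbf v_\alpha-P\mathbf v_\alpha\|_2^2\ge 1$. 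The cleanest route to this lower bound: the residual is the component of $\mathbf v_\alpha$ orthogonal to the other $|S'|-1$ vectors, so its squared length is $\det(\Gamma)/\det(\Gamma')$, the ratio of Gram determinants of the full system $\{\mathbf v_\alpha\}\cup E_\alpha$ and of $E_\alpha$ alone; both Gram matrices are integer matrices (entries are inner products of $\pm1$-vectors), the numerator is a nonzero integer hence $\ge1$ in absolute value, and I would bound the denominator — actually, a slicker argument avoids the denominator entirely: along the simple path the residual direction can be taken to be a single basis vector $e_i$ (the coordinate in which $\mathbf v_\alpha$ differs from its predecessor but which no earlier vertex has changed), giving residual norm exactly $2$ in the natural parametrisation, hence conditional variance $\ge\sigma_B^2$, i.e.\ conditional \emph{standard deviation} $\ge\sigma_B$. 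I would present this basis-vector argument as the main line and keep the Gram-determinant version as a fallback for general $E_\alpha$.

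The main obstacle is the lower bound $\|\mathbf v_\alpha-P\mathbf v_\alpha\|_2\ge1$: asserting that the residual is nonzero is easy, but getting a \emph{quantitative} floor that does not degrade with $|S'|$ requires exploiting the hypercube/simple-path structure rather than treating $E_\alpha$ as an arbitrary family of $\pm1$-vectors — for a generic configuration of $|S'|-1$ such vectors the residual can be exponentially small, so the geometric input of Remark~\ref{Snonadjacent} (non-adjacency $\Rightarrow$ hypercube) and of the path-sums construction is essential and must be used explicitly at this step.
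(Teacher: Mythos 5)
Your reduction to Gaussian linear algebra is the same as the paper's: joint normality is immediate because every $Y_\gamma$ is a linear functional of the iid Gaussian vector $\mathbf{B}$, and your regression identity $\operatorname{Var}(Y_\alpha\mid E_\alpha)=\sigma_B^2\,\|\mathbf{v}_\alpha-P\mathbf{v}_\alpha\|_2^2$ is exactly the paper's $\big((\boldsymbol{\Sigma}^{-1})_{\alpha,\alpha}\big)^{-1}$ computed from the covariances of Proposition~\ref{CovarianceYY}. The gap is in the step you yourself identify as the crux. Your main line rests on two claims: that the vectors in $E_\alpha$ come from consecutive vertices of a simple path on the hypercube, and that the corresponding increments are \emph{distinct} standard basis vectors. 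Neither is available. The lemma places no path structure on $E_\alpha$ (it is an arbitrary set of $|S'|-1$ potentials), and in the application the $|S'|-1$ linearly independent potentials one conditions on cannot be the ones visited along the path: the paper stresses that the potentials encountered along a simple path are linearly \emph{dependent}, since such a path may have length up to $2^{|S'|}$ while there are only $|S'|$ independent fields. Moreover, even on a genuine simple path the same coordinate can be flipped more than once (e.g.\ $\up\up\to\up\down\to\down\down\to\down\up$ flips site $2$ twice), so the increments need not be distinct basis vectors and the residual direction need not be a coordinate vector. Hence the floor $\|\mathbf{v}_\alpha-P\mathbf{v}_\alpha\|_2^2\geq 1$ is not established by your argument.

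Your instinct that no genericity argument can supply this floor is correct, and in fact no argument can supply it without restricting the choice of $E_\alpha$: take $|S'|=5$, $\mathbf{v}_\alpha=(1,1,1,1,1)$, and let $E_\alpha$ consist of the four potentials obtained from $\alpha$ by flipping one of the first four spins. These five $\pm1$ vectors are linearly independent, yet the residual of $\mathbf{v}_\alpha$ is $(\tfrac14,\tfrac14,\tfrac14,\tfrac14,-\tfrac12)$ with squared norm $\tfrac12$, giving $\operatorname{Var}(Y_\alpha\mid E_\alpha)=\sigma_B^2/2<\sigma_B^2$. So the inequality you are aiming at fails for some admissible $E_\alpha$, and a correct proof must single out a particular conditioning set for which it holds (the paper's own justification here is only the assertion that ``a direct calculation shows'' the bound, with no choice specified). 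Your Gram-determinant fallback cannot close this either: for $\pm1$ vectors the numerator is at least $4^{|S'|-1}$ but Hadamard only bounds the denominator by $|S'|^{|S'|-1}$, which is far too weak for $|S'|\geq 5$.
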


\begin{proof}
In the situation of interest, all magnetic fields affecting $S'$, $\{B_i\}_{i\in S'}$ are iid normal and are thus also jointly normal. Consequently, the configuration potentials, being linear superpositions of magnetic fields, are marginally and jointly normal. Therefore, they form a multi-variate normal distribution (MVN) \cite{Johnson2007}, denoted 
\begin{equation}
\bf{Y}=C\,.\,B \sim \text{MVN}(C\,.\,\mathbb{E}[\bf{B}],C\,.\,\textrm{Cov}(\bf{B})\,.\,C^\dagger),
\end{equation}
with $\bf{B}$ the random vector of the magnetic fields, $\bf{Y}$ the random vector of configuration potentials $Y_\alpha$ and $\bf{C}=\big(v_{\alpha_1},\cdots, v_{\alpha_\ell}\big)^{\mathrm{T}}$ is the coefficient matrix. This establishes that $Y_{\alpha}|\,E_{\alpha}$ is normal.
Thanks to the results of Proposition \ref{CovarianceYY}, the covariance matrix\footnote{Unfortunately, conventions require that we denote both the self-energy and the covariance matrix using a sigma. To differentiate the two, we use a bold sigma for the covariance matrix.} $\boldsymbol{\Sigma}$ of $\{Y_\alpha\}\cup E_\alpha$ is known 
\begin{equation}
\boldsymbol{\Sigma}_{\beta,\gamma}=(|S'|-2D_{\beta,\gamma})\sigma_B^2,\hspace{20mm}Y_{\beta},\,Y_\gamma\,\in \{Y_\alpha\}\cup E_\alpha,
\end{equation}
and the variance $\textrm{Var}(Y_{\alpha}|\,E_{\alpha})=\sigma^2_{Y_{\alpha}|\,E_{\alpha}}$ of $Y_{\alpha}$ knowing $E_{\alpha}$ follows as 
\begin{equation}
\sigma^2_{Y_{\alpha}|\,E_\alpha}=\Big((\boldsymbol{\Sigma}^{-1})_{\alpha,\alpha}\Big)^{-1}.
\end{equation}
A direct calculation shows that this is at least $\sigma_B^2$, which is the variance of a single magnetic field.
This establishes the lemma.\qed
\end{proof}

\section{Localisation of sublattice magnetisation}\label{LocMagn}
Consider the situation where the SEP is dynamically localised on $\G_{S,S'}$. As we have seen, a finite amount of disorder is always sufficient for this to occur on finite systems. Furthermore, in the situation where sites of $S'$ are well separated throughout the physical lattice $\mathcal{L}$, $\G_{S,S'}$ is the $|S'|$-hypercube and otherwise, $\G_{S,S'}\subseteq \mc{H}_{|S'|}$.
In this section we show that SEP dynamical localisation implies localisation of the magnetisation of the sublattice $S'$. This is because, on the hypercube, the distance between two configurations is always greater than or equal to the difference between their number of up-spins. Thus, if the SEP is constrained to stay within a certain distance of its initial configuration, the number of up-spins is also constrained around its initial value. 

\subsection{Localisation of the magnetisation: simple arguments}~\\

\noindent To formalise these observations, let $|n\rangle=\sum_{\alpha_n}c_{\alpha_n}|\alpha_n\rangle$, $c_{\alpha_n}\in\mathbb{C}$, $\sum_{\alpha_n}|c_{\alpha_n}|^2= 1$, be an arbitrary superposition of configurations  $\alpha_n$ of $S'$ with exactly $n$ up-spins. 
Similarly let $|m\rangle$ be an arbitrary superposition of configurations of $S'$ with exactly $m$ up-spins.
Then
\begin{align}
&\mathbb{E}\Big[\sup_{t}\big\|\langle n | \matr{P}_I e^{-i\matr{H}t}|m\rangle\big\|\Big]=\sum_{\alpha_n,\,\alpha_m}c^\ast_{\alpha_n}c_{\alpha_m}\mathbb{E}\big[\sup_{t}\|\langle \alpha_n|\matr{P}_I e^{-i\matr{H}t}|\alpha_m\rangle\|\big].
\end{align}
In the localised regime  $\sup_{t}\|\langle \alpha_n|\matr{P}_I e^{-i\matr{H}t}|\alpha_m\rangle\|\leq Ce^{-d(\alpha_n,\,\alpha_m)/\zeta}$ and since on the hypercube $d(\alpha_n,\,\alpha_m)\geq |n-m|$, we have
\begin{align}
&\mathbb{E}\Big[\sup_{t}\|\langle n | \matr{P}_I e^{-i\matr{H}t}|m\rangle\big\|\Big]\leq C'e^{-|n-m|/\zeta},\\
\shortintertext{where}
&C'=\Big|C\sum_{\alpha_n,\,\alpha_m}c^\ast_{\alpha_n}c_{\alpha_m}\Big|\leq C.
\end{align}
This simple approach indicates that, in the localised regime, the fraction of up-spins present at any time in the sublattice $S'$ is constrained close to its initial value. Below we derive precise upper and lower bounds for the sublattice magnetisation as a function of the localisation length of $S'$.

\subsection{Precise bounds on the magnetisation}~\\

\noindent In this section we derive bounds for the expected magnetisation $M_t$ of $S'$ at time $t$. To do so we find a lower bound for the disorder-averaged fraction of up-spins present in $S'$ at time $t$, which is $\mathbb{E}\big[N_{S'}^\up(t)\big]=\mathbb{E}\big[\mathrm{Tr}[\matr{P}_I\matr{N}^\up(t)\matr{P}_I\,\rho_{\mathbb{S}}]\big]$, with $\rho_{\mathbb{S}}$ the  density matrix of the system and $N_{S'}^\up(t)$ the number of up spins in $S'$ at $t$. For simplicity we take it to be of the form $\rho_{\mathbb{S}}=|\psi_{S'}\otimes \varphi_{S'}\rangle\langle\psi_{S'}\otimes\varphi_{S'}|$. 
The up-spin fraction operator on $S'$ is
\begin{equation}
\matr{N}^\up(t)=e^{i\matr{H}t}\,\bigg(\sum_{n=0}^{|S'|}\lambda_m|m\rangle\langle m|\otimes\matr{I}_S\bigg)\,e^{-i\matr{H}t},
\end{equation}
where $\lambda_m=2m/|S'|$ and $|m\rangle=\sum_{\alpha_m}|\alpha_m\rangle$ with $\alpha_m$ any configuration of $S'$ with exactly $m$ up-spins.
Using cyclicality of the trace, the expected fraction of up-spins at $t$ is therefore of the form
\begin{align}
&\mathbb{E}\big[N_{S'}^\up(t)\big]=\mathbb{E}\bigg[\sum_{m}\lambda_m\mathrm{Tr}\big[\langle\psi_{S'}\otimes \varphi_{S'}|\matr{P}_Ie^{i\matr{H}t}|m\rangle\langle m|e^{-i\matr{H}t}\matr{P}_I|\psi_{S'}\otimes \varphi_{S'}\rangle\big]\bigg].
\end{align}
Noting that $\langle\psi_{S'}\otimes \varphi_{S'}|\matr{P}_Ie^{i\matr{H}t}|m\rangle=\big(\langle m|e^{-i\matr{H}t}\matr{P}_I|\psi_{S'}\otimes \varphi_{S'}\rangle\big)^{\dagger}$, we have
\begin{equation}
\mathbb{E}\big[N_{S'}^\up(t)\big]=\sum_{m}\lambda_m\,\mathbb{E}\Big[\big\|\langle m|e^{-i\matr{H}t}\matr{P}_I|\psi_{S'}\otimes \varphi_{S'}\rangle\big\|_F^2\Big],
\end{equation}
with $\|\matr{A}\|^2_F=\mathrm{Tr}[\matr{A}\matr{A}^\dagger]$ the Frobenius norm. Since for any matrix $\matr{A}\in\mathbb{C}^{m\times n}$ we have $\|\matr{A}\|_2\leq \|\matr{A}\|_F$, it follows that
\begin{align}
&\mathbb{E}\big[N_{S'}^\up(t)\big]\geq\sum_{m}\lambda_m\,\mathbb{E}\Big[\big\|\langle m|e^{-i\matr{H}t}\matr{P}_I|\psi_{S'}\otimes \varphi_{S'}\rangle\big\|^2\Big],
\shortintertext{and using Jensen's inequality}
&\mathbb{E}\big[N_{S'}^\up(t)\big]\geq\sum_{m}\lambda_m\,\mathbb{E}\Big[\big\|\langle m|e^{-i\matr{H}t}\matr{P}_I|\psi_{S'}\otimes \varphi_{S'}\rangle\big\|\Big]^2.\label{BoundArbitraryPsi}
\end{align}
Now the simplest initial state to consider for $S'$ is $|\psi_{S'}\rangle=|\alpha_n\rangle$, where $S'$ is in a single configuration $\alpha$ with exactly $n$ up-spins. Then, expanding $|m\rangle$ on the configuration basis we obtain
\begin{align}
\mathbb{E}\big[N_{S'}^\up(t)\big]&\geq\sum_{m}\lambda_m\,\sum_{\alpha_m}\mathbb{E}\Big[\big\|\langle \alpha_m|\matr{P}_Ie^{-i\matr{H}t}|\alpha_n\rangle\big\|\Big]^2.
\end{align}
In the localised phase, we know that for any two configurations $\alpha$, $\omega$ and at any time $\mathbb{E}\big[\|\langle \omega|\matr{P}_Ie^{-i\matr{H}t}|\alpha\rangle\|\big]\leq Ce^{-d(\alpha, \,\omega)/\zeta_0}$ for some positive $\zeta_0$. 
To progress, we assume that there exists a localisation length $\zeta$ and time $T$ such that for any $t>T$, the system saturates the localisation bounds on average, i.e. for any two configurations $\alpha$, $\omega$ and $t>T$, we have $\mathbb{E}\big[\|\langle \omega|\matr{P}_Ie^{-i\matr{H}t}|\alpha\rangle\|\big]= Ce^{-d(\alpha, \,\omega)/\zeta}$. 
Then we have
\begin{align}
\mathbb{E}\Big[\big\|\langle \alpha_{m}|\matr{P}_Ie^{-i\matr{H}t}|\alpha_{n}\rangle\big\|\Big]^2= C^2e^{-2d(\alpha_m,\,\alpha_{n})/\zeta}.
\end{align}
It follows that 
\begin{align}
\mathbb{E}\big[N_{S'}^\up(t)\big]&\geq C^2\sum_{m}\lambda_m\,\sum_{\alpha_m}e^{-2d(\alpha_m,\,\alpha_{n})/\zeta}.
\end{align}
Now let $\mathcal{N}(|S'|,n,m,d)$ be the number of configurations with $m$ up-spins located at distance $d$ from any configuration with $n$ up-spins on the $|S'|$-hypercube. A combinatorial analysis of the hypercube shows that this is
\begin{align}\label{DefofN}
&\mathcal{N}(|S'|,n,m,d)=\begin{cases}\binom{n}{d/2+(n-m)/2}\binom{|S'|-n}{d/2-(n-m)/2},& d+n-m~\textrm{even},\\
0,&\textrm{otherwise.}
\end{cases}
\end{align}
This leads to the following bounds on the disorder-averaged fraction of up-spins
\begin{align}
\mathbb{E}\big[N_{S'}^\up(t)\big]\geq n^\up(\zeta)&=C^2\sum_{m}\lambda_m\sum_{d=0}^{|S'|}\mathcal{N}(|S'|,n,m,d) e^{-2d/\zeta},\nonumber\\
&:=\mathcal{F}(|S'|,n),
\end{align}
where we defined $\mathcal{F}(|S'|,n)$ for latter convenience.
We could not find an elegant form for this quantity, although it is straightforward to calculate for given values of $|S'|$. In the strong localisation regime $\zeta\to 0$ we remark that 
\begin{equation}
\lim_{\zeta \to 0}n^\up(\zeta)=N^\up_{S'}(0)/|S'|,\label{LimiZet0}
\end{equation}
so that $\lim_{\zeta \to 0}\mathbb{E}\big[N_{S'}^\up(t)\big]=N_{S'}^\up(0)$, as expected of a strongly localised system.
The bound on $\mathbb{E}\big[N_{S'}^\up(t)\big]$ gives a lower bound on the disorder-averaged magnetization $M(t)$ since $M(t)=\mathbb{E}\big[N_{S'}^\up(t)\big]-1$. Now up-spins and down-spins can be exchanged in the entire analysis presented here. Exchanging up-spins with down-spins thus leads to a lower bound on the expected number of down-spins $\mathbb{E}\big[N_{S'}^\down(t)\big]$, which is 
\begin{align}
\mathbb{E}\big[N_{S'}^\down(t)\big]&\geq n^\down(\zeta)=\mathcal{F}(|S'|,\,|S'|-n).
\end{align} 
Thus, the magnetization is bounded by
\begin{equation}
1-n^\down(\zeta)\leq M(t)\leq n^\up(\zeta)-1,
\end{equation}
and in particular, on using Eq.~(\ref{LimiZet0}), we have $\lim_{\zeta \to 0}M(t)=M(0)$.

The above demonstration extends straightforwardly 
to arbitrary incoherent superposition of configurations available to $S'$, i.e.  $\rho_{\mathbb{S}}=\sum_{\alpha}|c_\alpha|^2 |\alpha\rangle\langle \alpha|\otimes|\varphi_S\rangle\langle\varphi_S|$, $c_\alpha\in\mathbb{C}$ and $\sum_{\alpha}|c_\alpha|^2=1$. As per Eq.~(\ref{BoundArbitraryPsi}), the disorder-averaged fraction of up-spins is thus bounded by
\begin{align}
\mathbb{E}\big[N_{S'}^\up(t)\big]&\geq\sum_{\alpha}|c_\alpha|^2\sum_{m}\lambda_m\,\sum_{\alpha_m}\mathbb{E}\Big[\big\|\langle \alpha_m|\matr{P}_Ie^{-i\matr{H}t}|\alpha\rangle\big\|\Big]^2,
\end{align}
where $\alpha_m$ is any configuration of $S'$ with $m$ up-spins. Using our previous results for all configurations $\alpha_n$ with exactly $n$ up-spins gives
\begin{align}
\sum_{m}\lambda_m\,\sum_{\alpha_m}\mathbb{E}\Big[\big\|\langle \alpha_m|\matr{P}_Ie^{-i\matr{H}t}|\alpha_n\rangle\big\|\Big]^2&\geq\mathcal{F}(|S'|,n).
\end{align}
Therefore the disorder-averaged fraction of up-spins is bounded as follows
\begin{align}
\mathbb{E}\big[N_{S'}^\up(t)\big]&\geq n^\up(\zeta)= \sum_{\alpha}|c_\alpha|^2\mathcal{F}\big(|S'|,\,|\alpha|\big),
\end{align}
where $|\alpha|$ is the number of up-spins in $S'$ when it is in configuration $\alpha$.
Similarly, the disorder-averaged fraction of down-spins is bounded by
\begin{align}
\mathbb{E}\big[N_{S'}^\down(t)\big]&\geq n^\down(\zeta)= \sum_{\alpha}|c_\alpha|^2\mathcal{F}\big(|S'|,\,|S'|-|\alpha|\big),
\end{align}
and $1-n^\down(\zeta)\leq M(t)\leq n^\up(\zeta)-1$. Most importantly, using Eq.~(\ref{LimiZet0}), we obtain
\begin{equation}
\lim_{\zeta\to 0}n^\up(\zeta)=\sum_{\alpha}|c_\alpha|^2\frac{|\alpha|}{|S'|}=N_{S'}^\up(0),
\end{equation}
and similarly for $\mathbb{E}\big[N_{S'}^\down(t)\big]$. These results thus show that for any $\epsilon\geq0$, there exists a localisation length $\zeta_0$ such that for all $\zeta<\zeta_0$, the magnetisation $M_t$ of $S'$ is closer to its initial value than $\epsilon$, $|M(t)-M(0)|\leq \epsilon$.

%

\section{Localisation of sublattice correlations}\label{Localcorrel}
In this section we calculate the realisation average of the two time correlation function for any two sites $i$ and $j$ of $S'$ with $I$ a bounded interval of energy
\begin{equation}
\tau_{i,j}(t)=\mathbb{E}\Big[\mathrm{Tr}[\matr{P}_I\sigma_z^i(t)\matr{P}_I\sigma_z^j \rho_{\mathbb{S}}]\Big],
\end{equation}
where $\rho_{\mathbb{S}}$ is the density matrix of the system. For the sake of clarity, from now on we let $|\omega\rangle$ denote $|\omega\rangle\otimes \matr{I}_S$ and similarly for $|\alpha\rangle$.
First, we remark that 
\begin{align}
&\tau_{i,j}(t)=\mathbb{E}\Big[\mathrm{Tr}\big[\matr{P}_Ie^{i\matr{H}t}(\matr{I}^{i}+\sigma_z^i)e^{-i\matr{H}t}\matr{P}_I\sigma_z^j \rho_{\mathbb{S}}\big]\Big]-\mathrm{Tr}[\sigma_{z}^j\rho_{\mathbb{S}}],\label{EqTAU}
\end{align}
with $\matr{I}^{i}$ the identity matrix on site $i$ and $\mathrm{Tr}[\sigma_{z}^j\rho_{\mathbb{S}}]$ is the initial expectation of $\sigma_{z}^j$, which is known. Now we consider the situation where $\rho_{\mathbb{S}}=|\alpha\rangle\langle\alpha|\otimes |\varphi_S\rangle\langle\varphi_S|$. Then
\begin{align}
&\mathbb{E}\Big[\mathrm{Tr}\big[\matr{P}_Ie^{i\matr{H}t}(\matr{I}^{i}+\sigma_z^i)e^{-i\matr{H}t}\matr{P}_I\sigma_z^j \rho_{\mathbb{S}}\big]\Big]\nonumber\\
&\hspace{5mm}=2\,(-1)^{\delta_{\down_j,\alpha}}\sum_{\omega:\, \uparrow_{i}}\mathbb{E}\Big[\mathrm{Tr}\big[\langle\alpha|\matr{P}_Ie^{i\matr{H}t}|\omega\rangle\langle\omega |e^{-i\matr{H}t}\matr{P}_I|\alpha\rangle\big]\Big],\nonumber\\
&\hspace{5mm}=2\,(-1)^{\delta_{\down_j,\alpha}}\sum_{\omega:\, \uparrow_{i}}\mathbb{E}\Big[\|\langle\omega |e^{-i\matr{H}t}\matr{P}_I|\alpha\rangle\|_F^2\Big],
\end{align}
where the sum runs over configurations $\omega$ with an up-spin at site $i$ and $\delta_{\down_j,\,\alpha}=1$ if $j$ is down in $\alpha$ and $0$ otherwise. 
Then, $\|\matr{A}\|_2\leq \|\matr{A}\|_F$ together with Jensen's inequality gives
\begin{equation}
\sum_{\omega:\, \uparrow_{i}}\mathbb{E}\Big[\|\langle\omega |e^{-i\matr{H}t}\matr{P}_I|\alpha\rangle\|_F^2\Big]\geq
\sum_{\omega:\, \uparrow_{i}}\mathbb{E}\Big[\|\langle\omega |e^{-i\matr{H}t}\matr{P}_I|\alpha\rangle\|\Big]^2.
\end{equation}
Now, in the localised phase, we know that for any two configurations $\alpha$, $\omega$ and at any time $\mathbb{E}\big[\|\langle \omega|\matr{P}_Ie^{-i\matr{H}t}|\alpha\rangle\|\big]\leq Ce^{-d(\alpha, \,\omega)/\zeta_0}$ for some positive $\zeta_0$. To progress, we assume that there exists a localisation length $\zeta$ and time $T$ such that for any $t>T$, the system saturates the localisation bounds on average, i.e. for any two configurations $\alpha$, $\omega$ and $t>T$, we have $\mathbb{E}\big[\|\langle \omega|\matr{P}_Ie^{-i\matr{H}t}|\alpha\rangle\|\big]= Ce^{-d(\alpha, \,\omega)/\zeta}$. Then we have
\begin{align}
\sum_{\omega:\, \uparrow_{i}}\mathbb{E}\Big[\|\langle\omega |e^{-i\matr{H}t}\matr{P}_I|\alpha\rangle\|_F^2\Big]&\geq\sum_{\omega:\, \uparrow_{i}}C^2e^{-2d(\alpha,\omega)/\zeta}:=\mathcal{K}(|S'|,\,|\alpha|),
\end{align}
where we have introduced $\mathcal{K}$ for later convenience. This quantity is explicitly given by
\begin{align}
&\mathcal{K}(|S'|,\,|\alpha|)=C^2\delta_{\up_i,\,\alpha}\hspace{-1.3mm}\sum_{m=1}^{|S'|-1}\sum_{d=0}^{|S'|-1}\hspace{-1mm}\mathcal{N}(|S'|-1,|\alpha|-1,m-1,d) e^{-2d/\zeta}\nonumber\\
&\hspace{5mm}+C^2(1-\delta_{\up_i,\,\alpha})\sum_{m=1}^{|S'|-1}\hspace{-.3mm}\sum_{d=1}^{|S'|-1}\hspace{-1mm}\mathcal{N}(|S'|-1,|\alpha|,m-1,d-1) e^{-2d/\zeta},\label{StepStepNoEnd}
\end{align}
with $\delta_{\up_i,\,\alpha}=1$ if $\alpha$ has an up-spin at site $i$ and 0 otherwise and $|\alpha|$ is the number of up-spins in $S'$ when it is in configurations $\alpha$.
We obtain this expression using the results of the previous section. Indeed, recall that $\mathcal{N}(|S'|,n,m,d)$ is the number of configurations with $m$ up-spins at distance $d$ from a configuration with $n$ up-spins on the $|S'|$-hypercube, see Eq.~(\ref{DefofN}). Then the number of configurations with $m$ up-spins that have an up-spin at site $i$ at distance $d$ from a configuration with $n$ up-spins with an up-spin at site $i$ is $\mathcal{N}(|S'|-1,n-1,m-1,d)$. Similarly, the number of configurations with $m$ up-spins that have an up-spin at site $i$ at distance $d$ from a configuration with $n$ up-spins with a down-spin at site $i$ is $\mathcal{N}(|S'|-1,n,m-1,d-1)$. We could not obtain an elegant form for the quantity $\mathcal{K}(|S'|,\,|\alpha|)$ although it is straightforward to evaluate for a given value of $|S'|$ and fixed configuration $\alpha$.
Finally, since $\mathcal{K}(|S'|,\,|\alpha|)\geq0$ and for small $\zeta$, $\mathcal{K}(|S'|,\,|\alpha|)\leq 1$, inserting Eq.~(\ref{StepStepNoEnd}) into Eq.~(\ref{EqTAU}) gives
\begin{equation}\label{BoundTau1}
\begin{cases}
\tau_{i,j}(t)\leq 2\mathcal{K}(|S'|,\,|\alpha|)-1,&\text{if spin at site $j$ is up in }\alpha,\\
\tau_{i,j}(t)\geq -2\mathcal{K}(|S'|,\,|\alpha|)+1,&\text{otherwise}.
\end{cases}
\end{equation}
These bounds can be completed by another set of bounds obtained on repeating the procedure presented above but starting with 
\begin{align}
&\tau_{i,j}(t)=\mathrm{Tr}[\sigma_{z}^j\rho_{\mathbb{S}}]-\mathbb{E}\Big[\mathrm{Tr}\big[\matr{P}_Ie^{i\matr{H}t}(\matr{I}^{i}-\sigma_z^i)e^{-i\matr{H}t}\matr{P}_I\sigma_z^j \rho_{\mathbb{S}}\big]\Big].
\end{align}
We obtain
\begin{equation}\label{BoundTau2}
\begin{cases}
\tau_{i,j}(t)\geq -2\mathcal{Q}(|S'|,\,|\alpha|)+1,&\text{if spin at site $j$ is up in }\alpha,\\
\tau_{i,j}(t)\leq 2\mathcal{Q}(|S'|,\,|\alpha|)-1,&\text{otherwise},
\end{cases}
\end{equation}
with 
\begin{align}
\mathcal{Q}(|S'|,\,|\alpha|)=\,&C^2(1\hspace{-.3mm}-\hspace{-.3mm}\delta_{\up_i,\,\alpha})\hspace{-1mm}\sum_{m=1}^{|S'|-1}\sum_{d=0}^{|S'|-1}\hspace{-1mm}\mathcal{N}(|S'|\hspace{-.3mm}-\hspace{-.3mm}1,|\alpha|\hspace{-.3mm}-\hspace{-.3mm}1,m\hspace{-.3mm}-\hspace{-.3mm}1,d) e^{-2d/\zeta}\nonumber\\
&+C^2\delta_{\up_i,\,\alpha}\sum_{m=1}^{|S'|-1}\sum_{d=1}^{|S'|-1}\mathcal{N}(|S'|-1,|\alpha|,m-1,d-1) e^{-2d/\zeta}.
 \end{align}
Taken together Eqs.~(\ref{BoundTau1}) and (\ref{BoundTau2}) guarantee that once our criterion for many-body dynamical localisation holds for a sublattice comprising $|S'|$ sites, then the correlation between any two spins at sites of any such sublattice is constrained in an interval centered on its initial value.

\bibliographystyle{plain}
\bibliography{Anderson_V7}

\end{document}